\algrenewcommand\algorithmicrequire{\textbf{Input:}}
\algrenewcommand\algorithmicensure{\textbf{Output:}}
\newtheorem*{problem}{Problem}
\title{Hamiltonian path and Hamiltonian cycle are solvable in polynomial time in graphs of bounded independence number\footnote{This paper is outdated. Please refer to the joint paper 
\href{http://arxiv.org/abs/2403.05943}{\textit{Path Cover, Hamiltonicity, and Independence Number: An FPT Perspective, CoRR, abs/2403.05943 (2025)}}, which combines results of this paper and F. V. Fomin, P. A. Golovach, D. Sagunov, and K. Simonov's work: \href{https://arxiv.org/abs/2403.05943v1}{\textit{Hamiltonicity, Path Cover, and Independence Number: An FPT Perspective, CoRR, abs/2403.05943v1 (2024}).}}} 
\titlerunning{Hamiltonian path and cycle in graphs of bounded independence number} 
\author{Nikola Jedli\v{c}kov\'{a}}{Department of Applied Mathematics, Faculty of Mathematics and Physics, Charles University, Prague, Czech Republic}{jedlickova@kam.mff.cuni.cz}{https://orcid.org/0000-0001-9518-6386}{Supported by GAUK 370122.}
\author{Jan Kratochvíl}{Department of Applied Mathematics, Faculty of Mathematics and Physics, Charles University, Prague, Czech Republic}{honza@kam.mff.cuni.cz}{https://orcid.org/0000-0002-2620-6133}{}
\authorrunning{N. Jedličková, J. Kratochvíl} 
\keywords{Graph, Hamiltonian path, Hamiltonian cycle, Hamiltonian connectedness, Path Cover, Independence number, $L(2,1)$-labelling, Polynomial-time algorithm, NP-completeness} 
\def\computationproblem#1#2#3{
  \begin{center}
  \begin{tabular}{rp{0.8\textwidth}}
  {\sc Problem:\enspace}&#1\\
  {\sc Input:\enspace}&#2\\
  {\sc Question:\enspace}&#3\\
  \end{tabular}
  \end{center}
}
\begin{document}

\maketitle

\begin{abstract}
A Hamiltonian path (a Hamiltonian cycle) in a graph is a path (a cycle, respectively) that traverses all of its vertices. The problems of deciding their existence in an input graph are well-known to be NP-complete. In fact, they belong to the first problems shown to be computationally hard when the theory of NP-completeness was being developed. A lot of research has been devoted to the complexity of Hamiltonian path and Hamiltonian cycle problems for special graph classes, yet only a handful of positive results are known. The complexities of both of these problems have been open for graphs of independence number at most $3$. We answer this question in the general setting of graphs of bounded independence number by showing that both these problems are in the class XP when parameterized by the independence number of the input graph.

We also consider a newly introduced problem called \emph{Hamiltonian-$\ell$-Linkage}  which is related to the notions of a path cover and of a linkage in a graph. This problem asks if $\ell$ given pairs of vertices in an input graph can be connected by disjoint paths that together traverse all vertices of the graph. For $\ell=1$, Hamiltonian-1-Linkage asks for existence of a Hamiltonian path connecting a given pair of vertices.
Our main result reads that for every pair of integers $k$ and $\ell$, the  Hamiltonian-$\ell$-Linkage problem is polynomial time solvable for graphs of independence number at most $k$. 

As an application of our result, we provide a complete characterization of the computational complexity of the $L(2,1)$-labelling problem on $H$-free graphs, and of the related $L'(2,1)$-labelling problem on $H$-free graphs for triangle-free parameter graphs $H$. 
\end{abstract}

\section{Introduction}\label{sec:Intro}

A cycle in a graph is {\em Hamiltonian} if it contains all vertices of the graph. A graph is called {\em Hamiltonian} if it contains a Hamiltonian cycle. The notion of Hamiltonian graphs is well-known and intensively studied in graph theory. Many sufficient conditions for Hamiltonicity of graphs are known (e.g., minimum degree at least $|V(G)|/2$ which goes back to Dirac~\cite{dirac1952some}, the theorem of Ore~\cite{ore1960note} whose short proof from \cite{bondy2003short} is one of the jewels of graph theory, Chv\'atal's conditions on the degree sequence~\cite{chvatal1972hamilton}), but a simple necessary and sufficient condition is not known (and not likely to exists, as the problem itself is NP-complete). 
On the other hand, many open questions and conjectures are around. The more than 50 years old conjecture of Barnette states that every cubic planar bipartite 3-connected graph is Hamiltonian, to mention at least one. 

Hamiltonian cycles in solution spaces of configurations of certain types are called {\em Gray codes}. A prominent example is that the $2^n$ bitvectors of length $n$ (i.e., the vertices of the $n$-dimensional hypercube) can be arranged in a cyclic sequence 
so that the Hamming distance (i.e., the number of coordinates in which they differ) of any two consecutive ones equals $1$. In this direction we have to mention the recent proof of the famous Middle-levels conjecture, stating that the middle level subgraph of the hypercube of odd dimension is Hamiltonian~\cite{Mutze2016}. M\"utze also proved another almost 50 years old conjecture on Hamiltonicity of odd Kneser graphs in~\cite{Mutze2018}.

A {\em Hamiltonian path} in a graph is a path that contains all vertices of the graph. Obviously, every Hamiltonian graph contains a Hamiltonian path, and a graph is Hamiltonian if it contains a Hamiltonian path connecting a pair of adjacent vertices. A graph is called {\em Hamiltonian connected} if every two distinct vertices are connected by a Hamiltonian path. Chv\'atal and Erd\H{o}s stated and proved elegant sufficient conditions for Hamiltonian connectedness and for the existence of Hamiltonian paths in terms of comparing the vertex connectivity and independence number of the graph under consideration, cf. Proposition~\ref{prop:chvatal}.
For an excellent survey on Hamiltonian graphs cf.~\cite{gould2003advances} and a more recent one~\cite{gould2014recent}.

From a computational complexity point of view, all these problems are hard. Karp~\cite{Karp1972} proved already in 1972 that deciding the existence of Hamiltonian paths and cycles in an input graph are NP-complete problems. In this sense the problem of deciding the existence of a Hamiltonian path that connects two given vertices is the canonical one - if this can be solved in polynomial time in a given graph (or graphs from a given graph class), then 
existence of a Hamiltonian path, existence of a Hamiltonian cycle, and Hamiltonian connectedness 
can be solved in polynomial time by checking all pairs of (adjacent, in the case of Hamiltonian cycle) vertices of the input graph.    

In order to study the borderline between easy (polynomial-time solvable) and hard (NP-complete) variants of the problems, researchers have intensively studied the complexity of Hamiltonian-related problems in special graph classes. 

The existence of a Hamiltonian cycle remains NP-complete on planar graphs~\cite{garey1976planar}, circle graphs~\cite{damaschke1989hamiltonian} and for several other generalizations of interval graphs~\cite{bertossi1986hamiltonian}, for chordal bipartite graphs~\cite{mu96} and for split (and therefore also for chordal) graphs~\cite{golumbic2004algorithmic}, but is solvable in linear time in interval graphs~\cite{keil1985finding} and in convex bipartite graphs~\cite{mu96}.
The existence of a Hamiltonian path can be decided in polynomial time for cocomparability graphs~\cite{damaschke1991finding} and for circular arc graphs~\cite{damaschke1993paths}. 

Many of the above mentioned graph classes are hereditary, i.e., closed in the induced subgraph order, and as such can be described by collections of forbidden induced subgraphs. This has led to carefully examining $H$-free graphs, i.e., graph classes with a single forbidden induced subgraph. For many graphs $H$, the class of $H$-free graphs has nice structural properties (e.g., $P_4$-free graphs are exactly the cographs) or their structural properties can be used to design polynomial algorithms for various graph theory problems (as a recent example, cf. colouring of $P_6$-free graphs in~\cite{Spirkl2019}). (Here and later throughout the graph, $P_k$ denotes the path on $k$ vertices, $K_k$ denotes the complete graph on $k$ vertices, and $G+H$ denotes the disjoint union of (isomorphic copies of) the graphs $G$ and $H$.) For Hamiltonian-type problems, consider three-vertex graphs $H$. For $H=K_3$, both Hamiltonian cycle and Hamiltonian path are NP-complete on triangle-free graphs, since they are NP-complete on bipartite graphs~\cite{mu96}. For $H=3K_1$, the edgeless graph on three vertices, both Hamiltonian cycle and Hamiltonian path are polynomial time solvable~\cite{Duf1981}. The remaining two graphs, $P_3$ and $K_1+K_2$, are induced subgraphs of $P_4$, thus the corresponding class of $H$-free graphs is a subclass of cographs, and as such a subclass of cocomparability graphs, in which the Hamiltonian path and Hamiltonian cycle problems are solvable in polynomial time~\cite{damaschke1991finding}. However, a complete characterization of graphs $H$ for which Hamiltonian path or Hamiltonian cycle problems are solvable in polynomial time (and for which they are NP-complete) is not in sight. Until now, the complexity has been open even for the next smallest edgeless graph $4K_1$. The main goal of this paper paper is to answer this complexity question for all edgeless forbidden induced subgraphs.

\begin{theorem}\label{thm:main}
For every integer $k$, the existence of a Hamiltonian path and of a Hamiltonian cycle can be decided in polynomial time for $kK_1$-free graphs. In other words, Hamiltonian path and Hamiltonian cycle are in the class XP when parameterized by the independence number of the input graph. 
\end{theorem}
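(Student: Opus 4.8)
The plan is to reduce both problems to a single, more flexible one and then solve that by recursion on the connectivity structure of the graph. First I would observe, as already noted in the introduction, that deciding the existence of a Hamiltonian path (cycle) reduces to polynomially many instances of the problem of finding a Hamiltonian path between a prescribed pair of vertices $s,t$: for the path we try all $O(n^2)$ pairs $s,t$, and for the cycle we try, for every edge $st\in E(G)$, whether $G$ has a Hamiltonian $s$--$t$ path (such a path together with $st$ is exactly a Hamiltonian cycle, and it never uses $st$ when $n\ge 3$). Since these reductions do not change the graph, the independence number stays at most $k$. Thus it suffices to give an XP algorithm, parameterized by $k$, for the generalization in which we are given $\ell$ terminal pairs and must cover all vertices by $\ell$ disjoint paths joining them. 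I would prove the stronger statement that this \emph{Hamiltonian-$\ell$-Linkage} problem is solvable in time $n^{g(k,\ell)}$ whenever $\alpha(G)\le k$, and then read off Theorem~\ref{thm:main} for $\ell=1$.

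The engine is a dichotomy on the vertex connectivity $\kappa(G)$, with a threshold $c(k,\ell)$. If $\kappa(G)\ge c(k,\ell)$, the graph is so well connected relative to $\alpha(G)\le k$ that the required linkage is guaranteed to exist for every feasible choice of terminals; this is the base case, which I would establish by a Chv\'atal--Erd\H{o}s-type theorem extending the sufficient conditions of Proposition~\ref{prop:chvatal} (which already handle $\ell=1$ via Hamiltonian connectedness, $\kappa\ge\alpha+1$) to spanning $\ell$-linkages. If instead $\kappa(G)<c(k,\ell)$, a minimum vertex cut $S$ with $|S|<c(k,\ell)$ can be found in polynomial time; let $C_1,\dots,C_r$ be the components of $G-S$. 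In any solution, the traces of the $\ell$ paths inside a single $C_i$ form disjoint subpaths covering $C_i$, each end of which is either an original terminal or a neighbour of $S$; a counting of path-ends shows their number is at most $|S|+\ell$. I would therefore \emph{guess} the bounded interface data — which vertices of $S$ are used, how the segments string them together, how the segments are distributed among the $C_i$, and which vertices of each $C_i$ serve as segment endpoints — and for each guess recurse into every $C_i$ as a smaller Hamiltonian-$\ell_i$-Linkage instance with $\ell_i\le|S|+\ell$, finally checking consistency across $S$.

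The point that makes this recursion terminate with controlled parameters is a bound on independence numbers across a separation: since vertices in distinct components of $G-S$ are pairwise non-adjacent, taking one independent set per component yields an independent set of $G$, so $\sum_{i=1}^{r}\alpha(C_i)\le\alpha(G)$. A vertex cut produces $r\ge 2$ components each with $\alpha(C_i)\ge 1$, hence every $\alpha(C_i)\le\alpha(G)-1$; the independence number strictly drops at every separation step and the recursion has depth at most $k$. Starting from $\ell=1$, the number of terminal pairs grows only from $\ell$ to at most $c(k,\ell)+\ell$ per level, so over the $\le k$ levels it stays bounded by a function of $k$ alone; likewise $|S|$, $r\le k$, and the number of interface guesses per node are bounded by a function of $k$, while guessing the endpoint vertices inside the components costs only $n^{O(k)}$ per node. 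Multiplying this per-node cost by the bounded number $\le k^{k}$ of recursion nodes gives a running time $n^{g(k)}$, i.e.\ membership in XP.

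The main obstacle I expect is the base case: proving the right Chv\'atal--Erd\H{o}s-type guarantee that sufficiently high connectivity (relative to $\alpha$ and $\ell$) forces a \emph{spanning} $\ell$-linkage for every admissible assignment of terminals, and pinning down the exact feasibility conditions a guessed interface must satisfy so that ``consistency across $S$'' is both necessary and sufficient. A secondary, bookkeeping obstacle is organizing the interface guesses so that they are genuinely bounded in number and so that the parameter $\ell$ provably never escapes the bound $g(k)$ along any root-to-leaf branch; the strict drop of the independence number at each separation is precisely what keeps this under control, and verifying it rigorously is the technical heart of the argument.
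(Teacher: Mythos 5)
Your overall architecture is exactly the paper's: reduce Hamiltonian path/cycle to Hamiltonian $s$--$t$ path over all pairs (respectively all edges), generalize to Hamiltonian-$\ell$-Linkage, and recurse on a connectivity dichotomy --- high connectivity is an automatic yes-instance, low connectivity yields a small cut $A$ whose removal strictly decreases the independence number of every component, and one enumerates the bounded interface data (the paper's ``plausible scenario collections'') and recurses with at most $|A|+\ell$ new terminal pairs per component. This is precisely the structure of Theorem~\ref{thm:Ham-linkage-poly}, including the recursion $f(k,\ell)\le 2g(k,\ell)+f(k-1,g(k,\ell)+\ell)$ that keeps the exponent finite along a root-to-leaf branch of depth at most $k$.

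The genuine gap is the base case, which you correctly single out as the main obstacle but leave entirely unproven: you need that whenever $\alpha(G)<k$ and $c_v(G)$ is above the threshold, \emph{every} choice of $\ell$ terminal pairs admits a spanning linkage. For $\ell\ge 2$ no Chv\'atal--Erd\H{o}s-style extension of Proposition~\ref{prop:chvatal} gets off the ground on its own, because one first needs to know that \emph{some} linkage of the prescribed pairs exists, and ``high connectivity implies $\ell$-linkedness'' is itself a deep theorem, not a routine strengthening ($\ell$-connectivity does not suffice). The paper's Theorem~\ref{thm:Ham-linked} fills this in two steps: Thomas--Wollan's theorem (Proposition~\ref{prop:linked}: $10\ell$-connected implies $\ell$-linked) supplies a linkage; then, taking a linkage of the terminals covering the maximum number of vertices, an uncovered vertex $x$ would give, via Menger, a fan of $\min\{k\ell,|P|\}$ paths from $x$ to the linkage, the pigeonhole principle places $k$ of their endpoints on a single path $P_i$, and considering $x$ together with the $k-1$ successors of these attachment points, $\alpha(G)<k$ forces an edge whose presence lets one reroute $P_i$ through $x$ --- contradicting maximality. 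This is why the threshold is $g(k,\ell)=\max\{k\ell,10\ell\}$; without this lemma (or a substitute linkage theorem) your recursion has no terminating yes-case. Two smaller issues: the exact feasibility conditions on a guessed interface (which you also flag) are resolved in the paper by the definition of plausible scenarios and the Reduce subroutine, including the detail that a component vertex sandwiched between two cut vertices must be deleted rather than paired; and your claim of ``at most $k^k$ recursion nodes'' is off, since each node spawns polynomially many recursive calls, one set per interface guess --- the per-level accounting still gives a polynomial bound, but it must be done as in the paper's recursion for $f(k,\ell)$ rather than by multiplying a constant node count by a per-node cost.
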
 

The approach we use is motivated by Chv\'atal-Erd\H{o}s's theorems on highly connected graphs that we restate in the following section. Informally speaking, if a $kK_1$-free graph is highly connected, the existence of a Hamiltonian path and a Hamiltonian cycle are guaranteed. If its connectivity is low, we find a small vertex cut, delete it and process the connected components of the resulting graph recursively. However, a Hamiltonian path may wind through the graph and visit each component several times. This leads us to introducing a new notion of {\em Hamiltonian-$\ell$-linkage} in a graph as a linkage consisting of disjoint paths connecting specified pairs of vertices and together using all vertices of the graph. We first prove a strengthening of Chv\'atal-Erd\H{o}s theorem for Hamiltonian-linkage, exploiting a result of Thomas and Wollan on linkages in highly connected graphs. Based on this, we prove in Theorem~\ref{thm:Ham-linkage-poly} that for all fixed $k$ and $\ell$, the Hamiltonian-$\ell$-linkage problem is polynomial time solvable in $kK_1$-free graphs. Theorem~\ref{thm:main} then follows when setting $\ell=1$. The Hamiltonian-$\ell$-linkage problem is formally described and the results are proved in Section~\ref{sec:linkage}. 

Since the exponent of the polynomial function expressing the running time of this algorithm depends on $k$ and $\ell$ and large multiplicative constants are involved, we inspected the cases of small $k$ and Hamiltonian paths and cycles in a companion paper \cite{jedlivckova2024structure}. For $k=3,4$ and $5$, there are  explicitly described situations in which a Hamiltonian path does exist. This description leads to a significant improvement of the running time estimates.        

As an application of these results, we consider the concept of distance constrained labellings, a notion stemming from the Frequency Assignment Problem, in Section~\ref{sec:L21}. We give a complete characterization (with respect to $H$) of the computational complexity of $L(2,1)$-labelling on $H$-free graphs, and a characterization of the computational complexity of $L'(2,1)$-labelling on $H$-free graphs for triangle-free graphs $H$. 
Finally, we gather concluding remarks and open problems in Section~\ref{sec:Conclusion}.

\section{Preliminaries}

We consider simple undirected graphs without loops or multiple edges.  
The vertex set of a graph $G$ is denoted by $V(G)$, its edge set by $E(G)$. Edges are considered as two-element sets of vertices, thus we write $u\in e$ to express that a vertex $u$ is incident to an edge $e$.  For the sake of brevity, we write $uv$ instead of $\{u,v\}$ for the edge containing vertices $u$ and $v$. We say that $u$ is {\em adjacent to} $v$ if $uv\in E(G)$. The {\em degree} of a vertex is the number of other vertices adjacent to it. The subgraph of $G$ induced by vertices $A\subseteq V(G)$ will be denoted by $G[A]$. The {\em independence number}  of a graph $G$, denoted by $\alpha(G)$, is the order of the largest edgeless induced subgraph of $G$. With the standard notion of $K_k$ being the complete graph with $k$ vertices and $G+H$ being the disjoint union of graphs $G$ and $H$, $\alpha(G)$ is equal to the largest $k$ such that $kK_1$ is an induced subgraph of $G$. A graph is called {\em $H$-free} if it contains no induced subgraph isomorphic to $H$.  

A {\em path} in a graph $G$ is a sequence of distinct vertices such that any two consecutive ones are adjacent. The {\em length} of a path is the number of its edges. A {\em cycle} is formed by a path of length greater than 1 which connects two adjacent vertices. The path (cycle) is {\em Hamiltonian} if it contains all vertices of the graph. The \emph{path cover number} of $G$, denoted by $\mathrm{pc}(G)$, is the smallest number of vertex disjoint paths that cover all vertices of $G$.

A graph is connected if any two vertices are connected by a path. Since the problems we are interested in are either trivially infeasible on disconnected graphs, or can be reduced to studying the components of connectivity one by one, we only consider connected input graphs in the sequel. 

A {\em vertex  cut} in a graph $G$ is a set $A\subset V(G)$ of vertices such that the graph $G-A=G[V(G)\setminus A]$ is disconnected. The {\em vertex connectivity} $c_v(G)$ of a graph $G$ is the order of a minimum cut in $G$, or $|V(G)|-1$ if $G$ is a complete graph. Since we will not consider edge connectivity, we will often omit the adjective when talking about the connectivity measure, we always have vertex connectivity in mind. 

Although we consider undirected graphs, when we talk about a path in a graph, the path itself is considered traversed in the direction from its starting vertex to the ending one. Formally, when we say that a path $P$ connects a vertex $x$ to a vertex $y$, then by $P^{-1}$ we denote the same path, but traversed from $y$ to $x$. This is important when creating a longer path by concatenating shorter ones.  

We will be using the following well-known corollary of a theorem of Menger.

\begin{proposition}\label{prop:menger}  \cite{denley2001generalization}
Let $G$ be an $s$-connected graph with $s \geq 1$. If $x \in V(G)$, $Y \subseteq V(G)$ and $x \not \in Y$, then there exist distinct vertices $y_1, \ldots, y_m \in Y$, where $m= \min \{s, |Y|\}$, and internally disjoint paths $P_1, \ldots, P_m$ such that for every $i \in 1,\ldots,m$,
\begin{itemize}
\item 
$P_i$ is a path starting in vertex $x$ and ending in vertex $y_i$, and 
\item 
$P_i \cap Y = \{y_i\}$.
\end{itemize}
\end{proposition}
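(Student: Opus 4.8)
The plan is to deduce this fan-type statement from the classical two-vertex form of Menger's theorem by a standard auxiliary-vertex construction. First I would build a graph $G'$ from $G$ by adding one new vertex $y^\ast$ and joining it by an edge to every vertex of $Y$ (I may assume $Y \neq \emptyset$, as otherwise $m = 0$ and there is nothing to prove). Since $y^\ast$ is adjacent only to vertices of $Y$ and $x \notin Y$, the vertices $x$ and $y^\ast$ are non-adjacent in $G'$, so the internally-disjoint-paths version of Menger's theorem applies to the pair $x, y^\ast$: the maximum number of internally disjoint $x$--$y^\ast$ paths equals the minimum size of a vertex set separating $x$ from $y^\ast$.

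The key step is to show that every such separator $S$ (with $x, y^\ast \notin S$) satisfies $|S| \ge m = \min\{s, |Y|\}$. I would split into two cases. If $Y \subseteq S$, then $|S| \ge |Y| \ge m$ immediately. Otherwise pick $y \in Y \setminus S$. Since $y$ is adjacent to $y^\ast$ and neither $y$ nor $y^\ast$ lies in $S$, any $x$--$y$ path avoiding $S$ would extend through the edge $yy^\ast$ to an $x$--$y^\ast$ path avoiding $S$, contradicting the choice of $S$; hence $S$ already separates $x$ from $y$ inside $G$. This also forces $x$ and $y$ to be non-adjacent, so $S$ is a genuine vertex cut of the $s$-connected graph $G$, giving $|S| \ge s \ge m$. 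In either case $|S| \ge m$, and Menger's theorem then yields $m$ internally disjoint $x$--$y^\ast$ paths $Q_1, \dots, Q_m$ in $G'$.

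It remains to convert these into the required paths, and this is the step where the extra condition $P_i \cap Y = \{y_i\}$ must be earned --- it is not provided by Menger directly, since a path $Q_i$ may pass through several vertices of $Y$ before reaching $y^\ast$. I would handle this by truncation: each $Q_i$ reaches $y^\ast$ only through a neighbor in $Y$, so it contains at least one vertex of $Y$; let $y_i$ be the first vertex of $Y$ met when traversing $Q_i$ from $x$, and let $P_i$ be the initial segment of $Q_i$ from $x$ up to $y_i$. By construction $P_i$ is a path in $G$ from $x$ to $y_i$ with $P_i \cap Y = \{y_i\}$. Because the paths $Q_i$ share only the endpoints $x$ and $y^\ast$, their initial segments $P_i$ share only $x$; in particular the endpoints $y_i$ are pairwise distinct and the paths are internally disjoint, exactly as required.

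I would expect the main obstacle to be precisely this last requirement $P_i \cap Y = \{y_i\}$: the paths produced directly by Menger's theorem need not avoid $Y$ in their interiors, and the care lies in the truncation argument together with verifying that distinctness of the $y_i$ and internal disjointness survive it. A secondary point demanding attention is the cut bound, which must be argued uniformly so that it covers both regimes $|Y| \le s$ (where $m = |Y|$) and $|Y| \ge s$ (where $m = s$).
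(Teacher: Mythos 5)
Your proof is correct. Note, however, that the paper itself offers no proof of this proposition: it is stated as a known corollary of Menger's theorem and attributed to the literature (the cited reference of Denley), so there is no internal argument to compare yours against. What you have written is the standard self-contained derivation of the ``fan lemma'': add an auxiliary vertex $y^\ast$ joined to all of $Y$, apply the two-vertex Menger theorem to the non-adjacent pair $x,y^\ast$, and truncate each resulting path at its first vertex of $Y$. All the delicate points are handled properly --- the separator bound is argued in both regimes (if $Y\subseteq S$ then $|S|\ge |Y|\ge m$; otherwise $S$ is a genuine cut of the $s$-connected graph $G$, using that $x\notin Y$ and that a surviving neighbour $y\in Y\setminus S$ would yield an $x$--$y^\ast$ path through the edge $yy^\ast$), and the truncation step correctly delivers the condition $P_i\cap Y=\{y_i\}$, the distinctness of the $y_i$ (each $y_i\neq x$ since $x\notin Y$, and the $Q_i$ meet only in $x$ and $y^\ast$), and internal disjointness. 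Your closing remark is also the right emphasis: the condition $P_i\cap Y=\{y_i\}$ is exactly what bare Menger does not give and what the auxiliary-vertex-plus-truncation construction is designed to secure; this is also the property the paper actually uses later (e.g., in the proof of Theorem~\ref{thm:Ham-linked}, where each path $R_i$ must meet the linkage only in its endpoint).
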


We build upon the following results of Chv\'atal and Erd\H{o}s which we will later generalize to Hamiltonian linkages in Theorem~\ref{thm:Ham-linked}.

\begin{proposition} \cite{chvatal1972note} \label{prop:chvatal} 
Let $G$ be an $s$-connected graph. 
\begin{enumerate}
\item If $\alpha(G)<s+2$, then $G$ has a Hamiltonian path. 
\item If $\alpha(G)<s+1$, then $G$ has a Hamiltonian cycle.
\item If $\alpha(G)<s$, then $G$ is Hamiltonian connected (i.e., every pair of vertices is joined by a Hamiltonian path).
\end{enumerate}
\end{proposition}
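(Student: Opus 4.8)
The plan is to prove the three parts in the order given, establishing the Hamiltonian cycle statement (part~2) by the classical extremal argument on a longest cycle, then deriving the Hamiltonian path statement (part~1) from it by a universal-vertex reduction, and finally handling Hamiltonian connectedness (part~3) by rerunning the extremal argument on a longest path between the prescribed endpoints. We may always assume $s\ge 2$, since $s=1$ forces $\alpha(G)\le 1$ and hence $G$ complete in all three parts.

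For part~2, I would assume $G$ is $s$-connected with $\alpha(G)\le s$ yet has no Hamiltonian cycle, take a longest cycle $C$ with a fixed orientation, and write $w^{+}$ for the successor on $C$ of a vertex $w$. As $C$ is not Hamiltonian, pick a component $H$ of $G-V(C)$ and a vertex $v\in H$. Applying Proposition~\ref{prop:menger} with $x=v$ and $Y=V(C)$ produces $\min\{s,|V(C)|\}$ internally disjoint $v$–$C$ paths whose interiors avoid $C$ and therefore stay in the connected set $H$; their endpoints on $C$ are thus neighbours of $H$. Since $G$ has minimum degree at least $s\ge 2$, a longest cycle has more than $s$ vertices, so this gives $m\ge s$ distinct attachment vertices, and I take $\{y_1,\dots,y_m\}=N(H)\cap V(C)$.

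The crux is to show that $I=\{v\}\cup\{y_1^{+},\dots,y_m^{+}\}$ is independent: its $m+1$ vertices are distinct (the successor map is injective and $v\notin V(C)$), so independence forces $\alpha(G)\ge m+1\ge s+1$, contradicting $\alpha(G)\le s$. To rule out an edge $vy_i^{+}$, I would choose a neighbour $w_i\in H$ of $y_i$, route a path through $H$ from $w_i$ to $v$, and replace the edge $y_iy_i^{+}$ of $C$ by the detour $y_i\,w_i\cdots v\,y_i^{+}$, producing a strictly longer cycle. To rule out an edge $y_i^{+}y_j^{+}$ with $i\ne j$, I would reconnect the two arcs of $C$ obtained by deleting the edges $y_iy_i^{+}$ and $y_jy_j^{+}$ using the chord $y_i^{+}y_j^{+}$ together with an $H$-path between neighbours of $y_i$ and $y_j$, again obtaining a longer cycle. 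These two rerouting checks are the only genuine computations, and getting the orientations and arc endpoints exactly right is the main thing to be careful about.

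Part~1 then follows by adding a vertex $u$ adjacent to all of $V(G)$: a disconnecting set of $G+u$ must contain $u$, whence $c_v(G+u)\ge s+1$, while $u$ being universal gives $\alpha(G+u)=\alpha(G)$; so $\alpha(G)\le s+1$ yields $\alpha(G+u)\le c_v(G+u)$, part~2 supplies a Hamiltonian cycle of $G+u$, and deleting $u$ leaves a Hamiltonian path of $G$. For part~3 I would run the same extremal argument on a longest $x$–$y$ path rather than on a longest cycle, using Proposition~\ref{prop:menger} to find $m\ge s$ attachment vertices of an off-path component and showing that their successors, together with a vertex of that component, form an independent set of size $m+1\ge s+1>\alpha(G)$, where now the stronger hypothesis $\alpha(G)\le s-1$ creates the surplus. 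I expect the endpoint bookkeeping in part~3 — handling the cases where $x$ or $y$ is itself an attachment vertex and therefore has no usable successor on one side of the path — rather than the independent-set idea itself, to be the step demanding the most care.
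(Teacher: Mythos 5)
This proposition is not proved in the paper at all --- it is quoted from Chv\'atal and Erd\H{o}s --- and your argument is precisely the classical proof from that cited source: the longest-cycle/Menger-fan argument (an off-cycle vertex together with the successors of the attachment vertices would form an independent set of size $s+1$, and both possible edge types are killed by rerouting through the off-cycle component), the universal-vertex reduction for the Hamiltonian path statement, and a rerun of the extremal argument on a longest $x$--$y$ path for Hamiltonian connectedness. Your write-up is correct; the only details left implicit --- that a longest cycle (resp.\ longest $x$--$y$ path) has at least $s$ vertices, which follows from the minimum-degree bound or from the observation that otherwise every vertex of it would be an attachment vertex and two consecutive ones yield a longer cycle/path, and that in part~3 only the terminal vertex $y$ lacks a successor, so at most one attachment vertex is discarded and the set $\{v\}\cup\{y_i^{+}: y_i\neq y\}$ still has size at least $s>\alpha(G)$ --- are standard and close exactly as you indicate.
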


\section{Hamiltonian linkage in graphs} \label{sec:linkage}

In order to prove the main result of the paper, i.e., to show that Hamiltonian path and Hamiltonian cycle are solvable in polynomial time for graphs of bounded independence number, we introduce and study a more general problem of {\em Hamiltonian linkages} in graphs. 

\subsection{Definition and existence}\label{subsec:existencelinkage}

The following notion of {\em $\ell$-linked} graphs is well known and frequently used in the study of graph minors. 

\begin{definition} 
A graph $G$ is \emph{$\ell$-linked} if it has at least $2\ell$ vertices and for every sequence of distinct vertices $s_1, \ldots, s_{\ell}, t_1, \ldots, t_{\ell}$ (commonly referred to as the {\em terminal vertices}), there exist $\ell$  vertex disjoint paths $P_1, \ldots, P_{\ell}$ such that the end-vertices of each $P_i$ are $s_i$ and $t_i$, $i=1,2,\ldots,\ell$. Such a collection of paths is called a \emph{linkage} of the vertices $s_1, \ldots, s_{\ell}, t_1, \ldots, t_{\ell}$.
\end{definition}

For $\ell=1$, a graph is 1-connected if and only if it is 1-linked. In general, $\ell$-connectedness does not guarantee $\ell$-linkedness, but sufficiently high vertex connectivity does: 

\begin{proposition}\cite{thomas2005improved} \label{prop:linked}
If a graph $G$ is vertex $10\ell$-connected, then $G$ is $\ell$-linked. 
\end{proposition}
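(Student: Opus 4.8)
The plan is to derive the statement from the quantitatively sharp edge-count theorem of Thomas and Wollan, and then to indicate why that theorem holds. First I would record the elementary reduction. A $10\ell$-connected graph $G$ has at least $10\ell+1>2\ell$ vertices and minimum degree at least $10\ell$, so $|E(G)|=\frac12\sum_{v\in V(G)}\deg(v)\ge \frac12\cdot 10\ell\cdot|V(G)|=5\ell\,|V(G)|$; moreover $G$ is in particular $2\ell$-connected. Hence it suffices to establish the core assertion that every $2\ell$-connected graph $G$ with $|E(G)|\ge 5\ell\,|V(G)|$ is $\ell$-linked, from which the Proposition is immediate.

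For the core assertion the conceptual engine is a routing argument through a dense, highly connected ``core''. Fix terminals $s_1,\dots,s_\ell,t_1,\dots,t_\ell$. The idea has three steps. (i) Use the edge density to locate a clique-minor-like structure: pairwise-adjacent connected branch sets $V_1,\dots,V_m$ with $m\ge 3\ell$. (ii) Use $2\ell$-connectivity, via Menger's theorem (whose fan form is Proposition~\ref{prop:menger}), to route the $2\ell$ terminals by vertex-disjoint paths to $2\ell$ distinct branch sets, each such path meeting $\bigcup_i V_i$ only at its endpoint. (iii) Complete each required pair $s_i$--$t_i$ inside the clique-like structure: $s_i$ and $t_i$ have been plugged into two of the $2\ell$ occupied branch sets, and we link these through one of the $\ge\ell$ remaining branch sets used as a private ``connector'' for the pair. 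Since the branch sets are pairwise adjacent and internally connected, and distinct pairs use distinct occupied and connector sets, the resulting $\ell$ paths are vertex-disjoint, which is precisely the desired linkage.

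The hard part is entirely quantitative, and it is exactly where the argument must improve on the earlier one of Bollob\'as and Thomason. Extracting a $K_{3\ell}$-minor from edge density through the extremal bounds of Kostochka and Thomason costs an average degree of order $\ell\sqrt{\log\ell}$, which would only yield a connectivity threshold of the form $O(\ell\sqrt{\log\ell})$ rather than the linear $10\ell$. To reach the linear constant one cannot afford this detour through clique-minor extremal theory. Instead I would follow \cite{thomas2005improved} and argue by contradiction with a counterexample minimizing $|V(G)|+|E(G)|$: one shows that minimality, together with $2\ell$-connectivity and the density bound $|E(G)|\ge 5\ell\,|V(G)|$, forces enough usable structure (no edge deletable while preserving both hypotheses, tightly controlled low-degree vertices, and a large reservoir of high-degree vertices) to assemble the disjoint paths directly, never passing through an extremal minor theorem. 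The genuinely delicate point is controlling the interaction between the ``reach-in'' paths handed over by connectivity and the ``completion'' paths built inside the reservoir while holding every constant at its optimal linear value; this bookkeeping is the main content of \cite{thomas2005improved}, and I expect it to be the chief obstacle.
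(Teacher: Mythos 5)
This proposition is not proved in the paper at all---it is imported verbatim from Thomas and Wollan \cite{thomas2005improved}---and your argument ultimately rests on that same citation, so the two treatments coincide in substance. Your reduction (a $10\ell$-connected graph has more than $2\ell$ vertices, is $2\ell$-connected, and by the degree bound satisfies $|E(G)|\ge 5\ell\,|V(G)|$, so the edge-bound theorem of \cite{thomas2005improved} applies) is correct and is exactly how the $10\ell$ corollary is derived there; your remaining sketch rightly observes that clique-minor routing cannot yield a linear connectivity bound, but, as you yourself concede, the actual proof of the core edge-bound assertion is the content of the cited paper rather than of your proposal, which is the same position the paper takes.
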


A graph is called {\em Hamiltonian connected} if any two distinct vertices are connected by a Hamiltonian path. We introduce the following generalization of this notion to linkages:

\begin{definition}
A graph $G= (V,E)$ is \emph{Hamiltonian-$\ell$-linked} if it has at least $2\ell$ vertices and for every sequence of distinct vertices $s_1, \ldots, s_{\ell}, t_1, \ldots, t_{\ell}$, there exists an $\ell$-linkage with paths $P_1, \ldots, P_{\ell}$ such that $\bigcup_{i=1}^{\ell}V(P_i) = V$.
\end{definition}

As stated in Proposition~\ref{prop:chvatal}.3, Chv\'atal and Erd\H{o}s~\cite{chvatal1972note} proved that  a graph is Hamiltonian connected if its vertex-connectivity exceeds its independence number. Here we generalize this result to linkages.

\begin{theorem}\label{thm:Ham-linked}
For every $k>1$ and $\ell>0$, if a graph $G$ satisfies  $\alpha(G) < k$ and $c_v (G) \geq g(k,l)= \textrm{max} \{ k \ell, 10 \ell \}$, 
then $G$ is Hamiltonian-$\ell$-linked.
\end{theorem}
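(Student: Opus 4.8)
The plan is to argue by extremality in the spirit of the Chv\'atal--Erd\H{o}s proof of Proposition~\ref{prop:chvatal}.3, carried out now for a whole linkage rather than a single path. Since $c_v(G)\ge 10\ell$, Proposition~\ref{prop:linked} gives that $G$ is $\ell$-linked, so at least one linkage of $s_1,\dots,s_\ell,t_1,\dots,t_\ell$ exists. Among all linkages $\mathcal P=\{P_1,\dots,P_\ell\}$, with $P_i$ joining $s_i$ to $t_i$ and oriented from $s_i$ to $t_i$, I would fix one maximizing the number of covered vertices $\bigl|\bigcup_i V(P_i)\bigr|$ and suppose for contradiction that it is not spanning. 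Let $R=V(G)\setminus\bigcup_i V(P_i)\ne\emptyset$, let $H$ be a connected component of $G[R]$, fix $h\in H$, and call a path-vertex an \emph{attachment point} if it has a neighbour in $H$; let $A$ be the set of all attachment points.

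Two local exchange moves drive the argument, each valid precisely because $\mathcal P$ has maximum coverage. The \emph{insertion move}: if two consecutive vertices $u,u^+$ of some $P_i$ are both attachment points, then, using connectivity of $H$, one reroutes $P_i$ as $\dots,u,h_1,\dots,h_2,u^+,\dots$ along a path of $H$ joining a neighbour $h_1$ of $u$ to a neighbour $h_2$ of $u^+$, strictly increasing coverage. Consequently no path has two consecutive attachment points; in particular the successor $a^+$ of any non-terminal $a\in A$ is itself not adjacent to $H$, hence not adjacent to $h$. The \emph{swap move}: if $a,b\in A$ lie on the \emph{same} path $P_i$ with $a$ preceding $b$ and $a^+b^+\in E(G)$, then reversing the segment of $P_i$ between $a^+$ and $b$ and detouring from $a$ to $b$ through a path of $H$ joining a neighbour of $a$ to a neighbour of $b$ yields a strictly longer path from $s_i$ to $t_i$ while leaving the other paths untouched, again contradicting maximality. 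Hence the successors of distinct attachment points lying on a common path are pairwise non-adjacent.

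To finish I would build a forbidden independent set. If every path-vertex is an attachment point then any two consecutive vertices of $P_1$ are attachment points, contradicting the insertion move; so some path-vertex is not adjacent to $H$, and then $A$ is a vertex cut separating $H$ from the rest of the paths, giving $|A|\ge c_v(G)\ge k\ell$. (Equivalently one may apply Proposition~\ref{prop:menger} with $x=h$ and $Y=\bigcup_iV(P_i)$.) By pigeonhole some single path $P_i$ carries at least $k$ attachment points; at most one of them is the terminal $t_i$, so at least $k-1$ of them have pairwise distinct successors. By the insertion move these successors are non-adjacent to $h$, and by the swap move they are pairwise non-adjacent, so together with $h$ they form an independent set of size at least $k$, contradicting $\alpha(G)\le k-1$. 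Thus the maximum-coverage linkage must be spanning, and $G$ is Hamiltonian-$\ell$-linked.

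The delicate point, which is exactly why the threshold is $\max\{k\ell,10\ell\}$, is that the swap move rules out adjacencies only \emph{within} one path: using an edge $a^+b^+$ between successors on two different paths $P_i,P_j$ would merge them into a path from $s_i$ to $s_j$ and a path from $t_i$ to $t_j$, destroying the prescribed pairing, so no contradiction is available across paths. I therefore cannot analyze all of $A$ simultaneously and must pigeonhole at least $k\ell$ attachment points into a single path to secure $k$ of them there --- this is where the factor $k\ell$ is spent --- while the separate requirement $10\ell$ is needed only to guarantee, through Proposition~\ref{prop:linked}, that some starting linkage exists at all. The remaining care goes into the degenerate short-path configurations (single-edge paths, or all path-vertices being attachment points), which are precisely the situations the insertion move eliminates.
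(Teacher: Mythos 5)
Your proof is correct and shares its skeleton with the paper's: both take a maximum-coverage linkage (which exists by Proposition~\ref{prop:linked}), rule out improvements via the same two exchange moves (insertion of a detour between consecutive accessible vertices, and the rotation move using an edge between successors on a single path), pigeonhole at least $k\ell$ access vertices onto one path, and derive the contradiction from an independent set of size $k$ formed by an uncovered vertex together with $k-1$ successors. The technical realization differs in one respect: the paper fixes a single uncovered vertex $x$ and invokes Proposition~\ref{prop:menger} to obtain a fan of $\min\{k\ell,|P|\}$ internally disjoint paths from $x$ into the linkage, which then serve as the conduits for the reroutings; this forces a separate treatment of the degenerate case $k\ell\ge |P|$. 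You instead fix a connected component $H$ of the uncovered vertices, reroute through paths inside $H$, and obtain the bound $|A|\ge k\ell$ by noting that the attachment set $A$ is a vertex cut once the insertion move guarantees some path vertex lies outside $A$. This unifies the paper's two cases into a single argument and makes Menger's theorem optional (you mention it only as an alternative), at the price of having to verify that $A$ genuinely separates $H$ from the rest of the graph --- which you do. Your closing remark on why the swap move cannot be applied across two different paths (it would re-pair the terminals $s_i,t_i$) is precisely the reason the paper, too, must pigeonhole onto a single path; this is where the $k\ell$ term of $g(k,\ell)$ is spent in both proofs.
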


\begin{proof}
Let $G$ be a graph  such that $\alpha(G) < k$ and $c_v (G) \geq g(k,l)$. Since $g(k,\ell)\ge 10\ell$, $G$ has at least $10\ell+1>2\ell$ vertices. Let $s_1, \ldots, s_\ell, t_1, \ldots, t_\ell$ be a sequence of distinct vertices of $G$. By Proposition~\ref{prop:linked}, there exists a linkage between these vertices. Take a linkage ${\cal P} = \{P_1, \ldots, P_\ell\}$ such that the number of vertices contained in this linkage is maximum possible and set $P = \bigcup_{i=1}^{\ell} V(P_i).$

We claim that this linkage is Hamiltonian. Suppose for the contrary that there is a vertex $x$ which is not contained in any path $P_i$. Since $G$ is at least $k\ell$-connected,  by Proposition~\ref{prop:menger}, there exist at least $m= \min \{k\ell, |P|\}$ vertex disjoint (apart from $x$) paths $R_1, \ldots, R_{m}$ such that each $R_i$ is a path starting in vertex $x$ and ending in a vertex $p_i\in P$, and  $R_i \cap P = \{p_i\}$ for all $i \in 1,\ldots,m$. Thus each $R_i$, $i \in 1,\ldots,m$, shares exactly one vertex with exactly one path from $\cal P$.

If $k\ell \ge  |P|$, then $m=|P|$ and for every vertex in $P$ there exists a path $R_i$ ending in it. Then the linkage could be extended to include $x$ contradicting the maximality. Choose two consecutive vertices $u,v$ on $P_1$ and replace $P_1$ by $P'_1=s_1,\ldots,u,R_{\alpha}^{-1},x,R_{\beta},v,\dots,t_1$, where $\alpha$ and $\beta$ are indices such that $p_{\alpha}=u$ and $p_{\beta}=v$. Obviously, ${\cal P}'=\{P'_1,P_2,\ldots,P_{\ell}\}$ is a linkage for $s_1,\ldots,s_{\ell},t_1,\ldots,t_{\ell}$ and $P'=V(P'_1)\cup \bigcup_{i=2}^{\ell} V(P_i)$ contains more vertices than $P$.

Suppose that $ k\ell < |P|$. Then $m=k\ell$. By the pigeon-hole principle, there is a path $P_i$ 
such that at least $k$ paths $R_{i_1}, \ldots, R_{i_k}$ end in $P_i$. Without loss of generality assume that the indices $i_j,j=1,2,\ldots,k$ are ordered so that the end-vertices $p_{i_1}, p_{i_2},\ldots,p_{i_k}$ occur in this order on the path $P_i$, when traversed from $s_i$ to $t_i$. Denote by $y_j$ the successor of $p_{i_j}$ on $P_i$, $j=1,2,\ldots,k-1$ (note here that though $p_{i_k}$ may coincide with vertex $t_i$ and thus would not have any successor, for each of the other vertices $p_{i_1}, p_{i_2},\ldots,p_{i_{k-1}}$ the successor exists). See Figure~\ref{fig:cesty} for an illustration.
Consider the $k$ vertices $x,y_1,\ldots, y_{k-1}$. Since $\alpha(G)<k$, they cannot form an independent set.

\begin{figure}
\center
\includegraphics[scale=1]{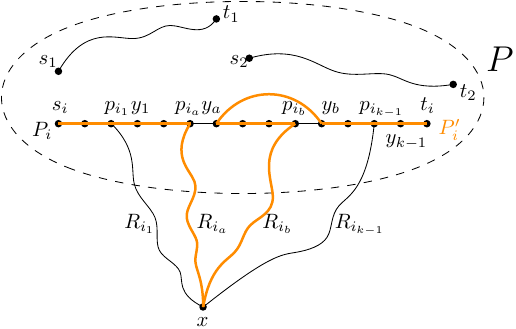}
\caption{An illustration of situation in the proof of Theorem \ref{thm:Ham-linked}.}
\label{fig:cesty}
\end{figure} 

If $x$ is adjacent to one of the $y$-vertices, say $xy_j\in E(G)$ for a $j\in\{1,\ldots,k-1\}$, we replace $P_i$ by the path $P'_i=s_i,\ldots,p_{i_j},R_{i_j}^{-1},x,y_j,\ldots,t_i$ and conclude that ${\cal P}'=\{P_1,P_2,\ldots,P_{i-1},P'_i,P_{i+1},\ldots,P_{\ell} \}$ is a linkage for $s_1,\ldots,s_{\ell},t_1,\ldots,t_{\ell}$ which traverses more vertices than $\cal P$, contradicting the assume maximality of $\cal P$.

If two of the $y$-vertices are adjacent, say $y_ay_b\in E(G)$ for some $a<b$, we replace $P_i$ by $P_i'=s_i,\ldots, 
p_{i_a},R_{i_a}^{-1},x,R_{i_b},p_{i_b},\ldots,y_a,y_b,\ldots,t_i$. Again as in the previous case, we conclude that ${\cal P}'=\{P_1,P_2,\ldots,P_{i-1},P'_i,P_{i+1},\ldots,P_{\ell} \}$ is a linkage for $s_1,\ldots,s_{\ell},t_1,\ldots,t_{\ell}$ which traverses more vertices than $\cal P$, contradicting the assumed maximality of $\cal P$. (A special case is when $b=a+1$ and $y_a=p_{i_b}$. Then $P_i'=s_i,\ldots, 
p_{i_a},R_{i_a}^{-1},x,R_{i_b},p_{i_b},y_b,\ldots,t_i$.) This concludes the proof of the claim that $\cal P$ is a Hamiltonian linkage.
\end{proof}

Note that the proof of Theorem~\ref{thm:Ham-linked} is constructive and yields a polynomial-time algorithm for constructing a Hamiltonian $\ell$-linkage. For more, cf. Subsection~\ref{subsec:construct} of the Conclusion section. 

Next we will attend to the computational complexity of Hamiltonian linkages in graphs. 
We will consider the following problems which differ in whether the terminal 
vertices are specified as part of the input (referred to as the {\sc Hamiltonian-$\ell$-Linkage} problem), or not, in which case we may ask if a Hamiltonian linkage exists for any selection of $\ell$ pairs of terminal 
vertices (the {\sc Hamiltonian-$\ell$-Linkedness} problem), or for at least one such collection of vertices (for historical reasons, this last problem is referred to as the {\sc $\ell$-Path-Cover} problem):

\computationproblem{{\sc Hamiltonian-$\ell$-Linkage}}{A graph $G$ and distinct vertices $s_1,\ldots,s_{\ell},t_1,\ldots,t_{\ell}\in V(G)$.}{Does there exist a Hamiltonian linkage for $s_1,\ldots,s_{\ell},t_1,\ldots,t_{\ell}$ in $G$?}

\computationproblem{{\sc Hamiltonian-$\ell$-Linkedness}}{A graph $G$.}{Is $G$ Hamiltonian-$\ell$-linked? }

\computationproblem{{\sc $\ell$-Path-Cover}}{A graph $G$.}{Do there exist distinct vertices   $s_1,\ldots,s_{\ell},t_1,\ldots,t_{\ell}\in V(G)$ and a Hamiltonian linkage for them in $G$?}

For $\ell=1$, {\sc Hamiltonian-$1$-Linkage} is the problem whether the input graph contains a Hamiltonian path connecting two vertices given as part of the input, {\sc Hamiltonian-$1$-Linkedness} is the problem asking if the input graph is Hamiltonian connected, and the {\sc $1$-Path-Cover} is asking if the input graph contains a Hamiltonian path. The last mentioned problem is well known to be NP-complete on general graphs (and on many narrower graph classes, cf. the Introduction section). The NP-completeness of deciding if a specified pair of vertices can be connected by a Hamiltonian path is shown NP-complete already in the seminal book of Garey and Johnson \cite{Garey:2000}. The NP-completeness of Hamiltonian connectedness is attributed to \cite{Garey:2000} and \cite{dean1993computational} in~\cite{kuvzel2012thomassen}, we will give an explicit argument in the next subsection. We will show that all three above defined problems are solvable in polynomial time for every fixed $\ell$ on graphs of bounded independence number. But first we will show that on general graphs, all three of them are NP-complete for every fixed $\ell$.

\subsection{Hamiltonian linkage in general graphs}

We use the well-known fact that deciding the existence of a Hamiltonian path in a connected graph is NP-complete, and we will gradually reduce this problem to all variants of the Hamiltonian linkage problem.

\begin{theorem}
For every fixed $\ell>0$, {\sc Hamiltonian-$\ell$-Linkage} is NP-complete on connected graphs.
\end{theorem}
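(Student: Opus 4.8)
The plan is to first verify membership in NP and then establish NP-hardness by reduction from the existence of a Hamiltonian path in a connected graph, which is well known to be NP-complete. Membership is immediate: a witness is the collection of $\ell$ paths $P_1,\dots,P_\ell$, and one checks in polynomial time that they are pairwise vertex-disjoint, that $P_i$ joins $s_i$ to $t_i$, and that $\bigcup_{i=1}^{\ell} V(P_i)=V(G)$.

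For hardness I would proceed in two steps, handling $\ell=1$ first. Given a connected graph $G$ on vertex set $V$ in which we wish to decide the existence of a Hamiltonian path, construct $G'$ by adding two new vertices $s$ and $t$, each made adjacent to every vertex of $V$ (and leaving $s,t$ non-adjacent). Then $G'$ is connected, and I claim $G$ has a Hamiltonian path if and only if $G'$ has a Hamiltonian path from $s$ to $t$: any Hamiltonian path $v_1\dots v_n$ of $G$ extends to $s,v_1,\dots,v_n,t$, while conversely a Hamiltonian $s$--$t$ path of $G'$ has $s$ and $t$ exactly as its endpoints (each appears once, at an end), so deleting them leaves a Hamiltonian path of $G$. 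This settles {\sc Hamiltonian-$1$-Linkage}.

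To reach arbitrary fixed $\ell$, I would augment the $\ell=1$ instance $(G',s,t)$ with $\ell-1$ \emph{forcing gadgets}: for each $i\in\{2,\dots,\ell\}$ add two new vertices $a_i,b_i$ together with the edge $a_ib_i$, keep $a_i$ of degree $1$ (its only neighbour being $b_i$), and attach $b_i$ to a fixed vertex of $V$ so that connectivity is preserved. Take the terminal pairs to be $(s,t)$ together with $(a_i,b_i)$ for $i=2,\dots,\ell$. In any Hamiltonian $\ell$-linkage, the path from $a_i$ to $b_i$ must leave $a_i$ through its unique neighbour $b_i$ and immediately reach its required endpoint $b_i$, hence it is forced to be exactly the single edge $a_ib_i$; consequently these $\ell-1$ paths consume precisely the new vertices, and the remaining path $P_1$ is confined to $V(G')$ and must be a Hamiltonian $s$--$t$ path there. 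This yields the equivalence with {\sc Hamiltonian-$1$-Linkage}, and hence NP-hardness for every fixed $\ell$.

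The computations are negligible, so the only delicate point is the \emph{forcing} argument: one must be sure the gadget paths cannot be rerouted to absorb vertices of $G'$ or to interfere with $P_1$. Pinning each $a_i$ to degree $1$ makes this automatic, since a path starting at a degree-$1$ terminal has no freedom in its first (and here only) step; this is where I would concentrate the write-up.
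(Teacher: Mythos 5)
Your proof is correct and follows essentially the same route as the paper's: reduce from the existence of a Hamiltonian path by adding one terminal pair adjacent to all of $V(G)$ (which must carry the Hamiltonian path) together with $\ell-1$ dummy terminal pairs whose linkage paths are forced to be single edges. The only difference is cosmetic --- your degree-one pinning of $a_i$ makes the forcing immediate, whereas the paper attaches its dummy pairs to $s_1$ and $t_1$ and argues that any rerouted dummy path would consume both of these vertices and block their own linkage.
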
 

\begin{proof}
Let $G$ be a connected graph, subject to the question if it contains a Hamiltonian path. Construct $G'$ by adding $2\ell$ new extra vertices $s_1, \ldots, s_{\ell}, t_1,\ldots, t_{\ell}$ and edges $s_1s_i, s_it_i$ and $t_it_1$ for all $ i=2,\ldots,\ell$ and $s_1v,t_1v$ for all  $v\in E(G)$. In any linkage of $s_1,\ldots,t_{\ell}$ (not necessarily a Hamiltonian one), the vertices $s_i,t_i$, $i=2,3,\ldots,\ell$ are linked via the edge $s_it_i$, since any other path in $G'$ connecting these vertices would pass through $s_1$ and $t_1$ and these two vertices could not be linked. Thus $s_1$ and $t_1$ must be linked via a path in $G$, which must visit all vertices of $G$ if the linkage is Hamiltonian. It follows that $s_1,\ldots,t_{\ell}$ are Hamiltonian linked in $G'$ if and only if $G$ contains a Hamiltonian path.   
\end{proof}

\begin{theorem}\label{thm:NPhard-PathCover}
For every fixed $\ell>0$, {\sc $\ell$-Path-Cover} is NP-complete on connected graphs.
\end{theorem}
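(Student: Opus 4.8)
The plan is to reduce from the problem of deciding whether a connected graph contains a Hamiltonian path, which is NP-complete, exactly as in the preceding proof. The key difficulty with {\sc $\ell$-Path-Cover} compared to {\sc Hamiltonian-$\ell$-Linkage} is that the need-to-be-linked terminals are no longer specified as part of the input; the algorithm is free to choose any $\ell$ pairs of distinct vertices. Hence a naive gadget is not enough: we must force any feasible choice of terminals to route a single path through the whole of the original graph $G$, while the remaining $\ell-1$ paths are compelled to absorb auxiliary vertices without interfering.

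First I would build $G'$ from the input graph $G$ by attaching a rigid gadget whose only way to be covered by $\ell$ vertex-disjoint paths forces exactly one path to traverse a copy of $G$. A clean way to do this is to add $2(\ell-1)$ pendant-like vertices $a_2,b_2,\ldots,a_\ell,b_\ell$ together with two further vertices $p,q$ that are made adjacent to every vertex of $G$ (so that $p$ and $q$ serve as universal attachment points for the $G$-part), and to arrange the $a_i,b_i$ so that each pair $a_i,b_i$ can only be swallowed by a short path of its own. The crucial design goal is that in any Hamiltonian $\ell$-linkage of $G'$, for each $i\ge 2$ one path must be the trivial or near-trivial path covering $\{a_i,b_i\}$ (forced because $a_i$ and $b_i$ have no other way to be reached as path endpoints without blocking the rest), leaving a single remaining path that is forced to enter the $G$-part through $p$, cover all of $V(G)$, and leave through $q$.

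The main step is then the equivalence argument. In the forward direction, a Hamiltonian path $v_1,\ldots,v_n$ of $G$ yields a Hamiltonian $\ell$-linkage of $G'$: take the path $p,v_1,\ldots,v_n,q$ as one link and the short paths on each $\{a_i,b_i\}$ as the others, and verify these cover $V(G')$ and are vertex-disjoint. In the converse direction, I would argue that \emph{any} Hamiltonian $\ell$-linkage of $G'$, for \emph{any} admissible choice of terminals, must isolate the $G$-part onto a single link whose restriction to $V(G)$ is a Hamiltonian path of $G$. The degree and adjacency constraints of the gadget vertices $a_i,b_i$ force them to be endpoints of their own short links, so at most one link can use the vertices $p,q$ as the portals into and out of $V(G)$; since every vertex of $V(G)$ must be covered and the links are disjoint, that one link must run continuously through all of $V(G)$, giving the desired Hamiltonian path.

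The hard part will be making the gadget genuinely rigid against the adversary's freedom to pick the terminals: because the terminals are not prescribed, I must ensure there is no alternative way to partition $V(G')$ into $\ell$ paths that, say, splits $V(G)$ across several links or uses $a_i,b_i$ as interior vertices of a long path threading through $G$. This is precisely where careful choice of the adjacencies of $a_i,b_i$ (keeping their degree low and their neighbourhoods disjoint from $V(G)$) is essential, and I would devote the bulk of the verification to showing that these local constraints leave the single-path-through-$G$ structure as the only possibility. Membership in NP is immediate, since a Hamiltonian $\ell$-linkage together with its terminals is a polynomially checkable certificate.
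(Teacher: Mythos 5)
Your proposal has the right skeleton---reduce from \textsc{HamiltonianPath}, attach a gadget that forces $\ell-1$ short paths and one long path through $G$---but it leaves unresolved exactly the step that carries the proof, and the one concrete design choice you do make introduces a genuine flaw. You add \emph{two} universal vertices $p,q$ adjacent to all of $V(G)$ and intend the long path to ``enter through $p$, cover $V(G)$, and leave through $q$.'' Nothing in your gadget forces $p$ and $q$ to sit at the portal positions: in a path cover they may occur as \emph{interior} vertices of the long path, interleaved with segments of $G$. A path of the form $A,p,B,q,C$ (with $A,B,C$ vertex-disjoint paths in $G$) covers $V(G)\cup\{p,q\}$ whenever $G$ merely has a path cover by at most three paths, so the backward direction of your equivalence fails: $G'$ can admit an $\ell$-path cover while $G$ has no Hamiltonian path. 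A second unresolved tension is connectivity, which the theorem requires: you stipulate that the $a_i,b_i$ have ``neighbourhoods disjoint from $V(G)$,'' but if the pairs are attached to nothing else the graph $G'$ is disconnected, and if they are attached to $p$ or $q$ then the claim that each pair ``can only be swallowed by a short path of its own'' is no longer automatic (a path can thread $b_i,a_i,p,a_j,b_j$ and free up a path for other uses). You explicitly defer this rigidity verification (``I would devote the bulk of the verification\dots''), but that verification \emph{is} the proof.

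For comparison, the paper's construction avoids both problems with a single asymmetric gadget: one universal vertex $a$, a stub path $a\mbox{--}b\mbox{--}c$, and $\ell$ pendant vertices $d_1,\dots,d_\ell$ attached to $c$. The pendants force (up to one exception) $\ell-1$ trivial paths, so a cover by $\ell$ paths leaves exactly one path for $V(G)\cup\{a,b\}$ (or $V(G)\cup\{a,b,c,d_1\}$); that path must start at the degree-constrained stub and pass through $a$, and---this is the key point your two-portal design loses---since $a$ is the \emph{only} vertex of the gadget adjacent to $V(G)$, once the path enters $G$ it can never leave, so its restriction to $V(G)$ is forced to be a Hamiltonian path. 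Keeping a single universal attachment point, pinned next to a forced stub, is what makes the equivalence hold in both directions; with two universal vertices the forcing collapses.
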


\begin{proof}
Let $G$ be a connected graph, subject to the question if it contains a Hamiltonian path. Construct $G'$ by adding $\ell+3$ new extra vertices $a,b,c,d_1,\ldots,d_{\ell}$ and edges $au, u\in V(G)$, $ab,bc$ and $cd_i, i=1,2,\ldots,\ell$. In any path cover, at most two of the vertices $d_i, i=1,2,\ldots,\ell$ belong to the same path, since such a path passes through vertex $c$. If this is the case, then a path cover by $\ell$ can be achieved only by making all vertices of $V(G)\cup\{a,b\}$ be covered by a single path, which would have to start in $b$, and this is possible only if (and if) $G$ itself contains a Hamiltonian path. If each vertex $d_i, i=1,2,\ldots,\ell$ belongs to a different path, we have already $\ell$ of them, and so one of them must pass through $c, b$ and $a$ and continue as a Hamiltonian path in $G$. Thus $G'$ can be covered by at most $\ell$ paths if and only if $G$ contains a Hamiltonian path, and the path cover number of $G'$ is exactly $\ell$ in such a case.
\end{proof}

\begin{theorem}
For every fixed $\ell>0$, {\sc Hamiltonian-$\ell$-Linkedness} is NP-complete on connected graphs.
\end{theorem}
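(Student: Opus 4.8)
The plan is to prove membership in NP and NP-hardness separately. For membership I would use that $\ell$ is fixed: there are only $O(n^{2\ell})$ ordered choices of distinct terminals $s_1,\ldots,s_\ell,t_1,\ldots,t_\ell$, so a certificate for a yes-instance can consist of one Hamiltonian $\ell$-linkage for each such choice. This certificate has polynomial size and can be checked in polynomial time, so {\sc Hamiltonian-$\ell$-Linkedness} lies in NP for every fixed $\ell$. (Note that the complement would only give a $\Pi$-type statement, so it is the polynomial bound on the number of terminal choices that secures membership.)

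For hardness I would reduce from {\sc $\ell$-Path-Cover}, which is NP-complete on connected graphs by Theorem~\ref{thm:NPhard-PathCover}; recall that a connected graph $G$ is a yes-instance of {\sc $\ell$-Path-Cover} precisely when its vertices can be covered by $\ell$ vertex-disjoint paths, i.e. $\mathrm{pc}(G)\le\ell$ (for $|V(G)|$ large enough). Given such a $G$ I would build $G'$ by adding $2\ell$ new, pairwise non-adjacent vertices $w_1,\ldots,w_{2\ell}$, each joined to every vertex of $V(G)$ and to none of the other $w_j$. The graph $G'$ is connected and constructible in polynomial time, and the reduction rests on the claim that $G'$ is Hamiltonian-$\ell$-linked if and only if $\mathrm{pc}(G)\le\ell$; the finitely many instances with $|V(G)|<3\ell$ I would dispose of by brute force.

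The easy direction of the claim uses one distinguished terminal choice. Pairing the new vertices as $(w_1,w_2),\ldots,(w_{2\ell-1},w_{2\ell})$, every linkage path $P_i$ runs from $w_{2i-1}$ to $w_{2i}$; since the $w_j$ are mutually non-adjacent and all $2\ell$ of them are now endpoints, the interior of each $P_i$ consists solely of vertices of $V(G)$ and is therefore a single $G$-subpath. Thus such a Hamiltonian linkage partitions $V(G)$ into exactly $\ell$ vertex-disjoint $G$-paths, so its existence forces $\mathrm{pc}(G)\le\ell$; contrapositively, if $\mathrm{pc}(G)>\ell$ then this terminal choice already fails and $G'$ is not Hamiltonian-$\ell$-linked.

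The hard direction---assuming $\mathrm{pc}(G)\le\ell$, exhibit a Hamiltonian $\ell$-linkage for \emph{every} terminal choice---is where the universal quantifier in the definition of Hamiltonian-$\ell$-linkedness bites, and I expect it to be the main obstacle. The organizing observation is a counting identity: if a terminal choice designates exactly $a$ of the $w_j$ as endpoints, then, because every $w_j$ must be covered and no two $w_j$ are adjacent, any Hamiltonian linkage must cut $V(G)$ into exactly $R=3\ell-a$ maximal $G$-subpaths (``runs''), the remaining $2\ell-a$ vertices $w_j$ acting as interior ``hinges'' separating consecutive runs. A convenient sanity check is that the $2R$ run-endpoints split as $2(2\ell-a)$ absorbed by hinges, $a$ capped by terminal $w_j$'s, and exactly $2\ell-a$ left free, which matches the number of $V(G)$-terminals. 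Since $\ell\le R\le 3\ell\le|V(G)|$ and $\mathrm{pc}(G)\le\ell\le R$, one can refine a path cover of $G$ into exactly $R$ non-empty $G$-paths, cut so that each $V(G)$-terminal is a free run-endpoint. It then remains to assemble these $R$ runs and the $2\ell$ vertices $w_j$ into the $\ell$ prescribed routes, using each hinge $w_j$ (adjacent to all of $V(G)$) to splice two runs and each terminal $w_j$ to cap a run. I would carry this out by a case analysis on the distribution of runs and hinges among the $\ell$ paths, the delicate point being to realize the prescribed endpoints while respecting the forced run count $R$. Combining both directions gives the stated NP-completeness for every fixed $\ell$.
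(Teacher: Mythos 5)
Your membership argument, your reduction (add $2\ell$ pairwise non-adjacent vertices joined to all of $V(G)$, brute-force small instances), and your easy direction are exactly the paper's, and they are correct; your counting identity $R=3\ell-a$ for the number of maximal $G$-runs is also correct and is a nice way to organize the converse. The problem is that your hard direction stops precisely where the actual work lies: "assemble the runs and the $w_j$ by a case analysis on their distribution among the $\ell$ paths" is a plan, not a proof, and you yourself flag it as the main obstacle. Two concrete issues must be resolved there. First, the refinement step needs more than "each $V(G)$-terminal is a free run-endpoint": you must cut so that no run has terminals of \emph{different} pairs at its two ends (e.g.\ a two-vertex run $s_1s_2$), since such a run would have to lie simultaneously on path $1$ and path $2$; the standard fix is to cut immediately after each terminal along the cover paths, so every run carries at most one terminal, sitting at one of its ends. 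Second, you need an actual assembly rule rather than an open-ended case analysis.

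The paper's construction shows how to avoid the case analysis entirely, and with your setup it is short. After the refinement just described you have exactly $\ell$ free runs and $2\ell-a$ marked runs (one per $V(G)$-terminal). Serve each pair minimally: for a pair with both terminals in $X$, take (terminal, free run, terminal); for a mixed pair, take (the $X$-terminal, then the marked run traversed so as to end at the $V(G)$-terminal); for a pair with both terminals in $V(G)$, take (marked run from $s_i$ outward, one hinge, marked run into $t_i$). A count using $a=2k+h$ (where $k$ pairs lie fully in $X$ and $h$ pairs are mixed) shows the leftovers are $\ell-k$ free runs and $\ell-k$ hinges --- equinumerous --- so they can be strung into a single alternating chain (run, hinge, run, hinge, \dots) and spliced into any one of the constructed paths at a place where a vertex of $X$ is followed by a vertex of $V(G)$; every constructed path has such a transition, and every splice edge exists because hinges are adjacent to all of $V(G)$. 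Writing this out closes the gap; without it, the universally quantified direction of the equivalence, which is the whole point of the reduction, remains unproven.
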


\begin{proof}
Let us first briefly argue that for a fixed $\ell$, the {\sc Hamiltonian-$\ell$-Linkedness} problem is in NP, though it asks about the existence of Hamiltonian linkage {\em for any} collection of $\ell$ pairs of vertices of the input graph. For {\sc Hamiltonian-$\ell$-Linkage}, the guess-and-verify certificate to prove NP-membership is simply a collection of $\ell$ paths connecting the terminal 
vertices, 
including checking that every vertex of the input graph belongs to exactly one of these paths. For   {\sc Hamiltonian-$\ell$-Linkedness}, a guess-and-verify certificate is a list of $O(n^{2\ell})$ such collections of paths, where $n$ is the order of the input graph. For a fixed $\ell$, this is a polynomial-time verifiable certificate.

Now we prove the NP-hardness by a reduction from {\sc $\ell$-Path-Cover} which is NP-complete according to Theorem~\ref{thm:NPhard-PathCover}. Given an input graph $G$ with $n$ vertices subject to the question if $G$ can be covered by $\ell$ disjoint paths, we may assume that $n>3\ell$, since we can solve the problem by brute force (which would still take only $O(1)$ time) otherwise. We construct a graph $G'$ by adding $2\ell$ new extra vertices $x_1,\ldots,x_{2\ell}$ to $G$, and making each of them adjacent to all vertices of $G$. (Formally, $V(G')=V(G)\cup  X$, where $X=\{x_1,\ldots,x_{2\ell}\}$, and $E(G')=E(G)\cup\{xu: x\in X, u\in V(G)\}$.) We claim that $G'$ is Hamiltonian-$\ell$-linked if and only if $G$ can be covered by $\ell$ disjoint paths.

For the ``only if'' direction, suppose that $G'$ is Hamiltonian-$\ell$-linked. Then the vertices $s_i=x_i, t_i=x_{\ell+i}, i=1,2,\ldots,\ell$ are Hamiltonian linked in $G'$, and in such a linkage, each path $P_i=s_i\ldots t_i$ induces a path $A_i=P_i\cap V(G)$ in $G$ and these paths form a path cover of $G$.

The proof of the ``if'' part is slightly more involved. Suppose $G$ can be covered by $\ell$ disjoint paths $A_1,\ldots,A_{\ell}$. We will show that any collection of vertices $s_1,\ldots,t_{\ell}$ in $G'$ are Hamiltonian linked in $G'$. Let $s_1,\ldots,t_{\ell}\in V(G')$ be such a collection, for the sake of brevity, we will call these vertices {\em marked}. Let $k$ of the pairs $s_i,t_i$ have both vertices in $X$ and let $h$ of them have only one vertex in $X$. The remaining $\ell-k-h$ pairs have both vertices in $V(G)$. That means that $X$ has $2\ell-2k-h$ vertices which are not marked, and we will call these vertices {\em free}.         

For each path $A_i$, choose its orientation from left to right and partition its vertices into {\em segments} in the following way: For every marked vertex $u\in V(A_i)$, the segment $S(u)$ contains $u$ and the maximal subpath of $A_i$ which contains vertices to the left of $u$ and for which $u$ is the only marked vertex of this segment. If the right end-vertex of the path $A_i$ is not marked, we create one more segment containing the vertices lying on $A_i$ to the right from its right-most marked vertex, and we call this segment {\em free}.

Having done this with all paths $A_i,i=1,2,\ldots,\ell$, we get the vertices of $G$ partitioned into $2\ell-2k-h$ marked segments and at most $\ell$ free ones. Now we modify the segments so that we have exactly $\ell$ of the free ones. Since $n>3\ell$, we have at least $\ell$ vertices in $G$ which are not marked. Some of them will be taken away from their segments and pronounced single-vertex free segments so that the all the segments are paths in $G$, they form a partition of $V(G)$, there are exactly $\ell$ free segments and every marked segment is either a single marked vertex or a path containing exactly one marked vertex, which is then an endpoint of this segment. To achieve this, we simply remove the necessary number of unmarked vertices from their former segments from left to right.

Finally we show how to build a Hamiltonian linkage for the marked vertices. For a pair of marked vertices $s_i,t_i\in X$, choose a free segment $S$ in $G$ and take the path $P_i=s_i,S,t_i$ (and delete $S$ from the pool of free segments). For a pair of marked vertices $s_i,t_i$ such that exactly one of them, say $s_i$, is in $X$, tak the path $P_i=s_i,S(t_i)$. For a pair of marked vertices $s_i,t_i\in V(G)$, choose a free vertex $x\in X$ and take the path $P_i=S(s_i)^{-1},x,S(t_i)$. Having done this for all $i=1,\ldots,\ell$, we have created disjoint paths $P_i,i=1,2,\ldots,\ell$ in $G'$, and these paths contain all vertices of $G'$ except for $2\ell-2k-h-(\ell-k-h)=\ell-k$ free vertices of $X$ (say, $y_1,\ldots,y_{\ell-k}$) and $\ell-k$ free segments in $G$ (which were not used so far, say $S_1,\ldots,S_{\ell-k}$). The path $P_1$ contains two consecutive vertices $u\in X, v\not\in X$ (or  $u\not\in X, v\in X$, which is analogous) and we modify $P_1$ by inserting the sequence $S_1,y_1,S_2,y_2,\ldots,S_{\ell-k},y_{\ell-k}$ between $u$ and $v$. Now the paths $P_1,\ldots,P_{\ell}$ form a path cover of $G'$. This concludes the proof.    
\end{proof}

\subsection{Hamiltonian linkage in graphs of bounded independence number}

Now we prove the main result of this section. 

\begin{theorem}\label{thm:Ham-linkage-poly}
For every $k>0, \ell>0$, there exists an integer $f(k,\ell)<\infty$ such that {\sc Hamiltonian-$\ell$-Linkage}
can be solved in time $O(n^{f(k,\ell)})$ for $n$-vertex input graphs of independence number smaller than $k$.  
\end{theorem}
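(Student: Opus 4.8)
The plan is to turn the dichotomy foreshadowed in the introduction into a recursion. Fix an input $G$ with $\alpha(G)<k$ and terminals $s_1,\dots,s_\ell,t_1,\dots,t_\ell$, and recall the threshold $g(k,\ell)=\max\{k\ell,10\ell\}$ of Theorem~\ref{thm:Ham-linked}. If $G$ has at most $2g(k,\ell)$ vertices I decide the instance by brute force, which costs only $O(1)$ for fixed $k,\ell$. Otherwise I compute the connectivity. If $c_v(G)\ge g(k,\ell)$ then, since $\alpha(G)<k$, Theorem~\ref{thm:Ham-linked} certifies that $G$ is Hamiltonian-$\ell$-linked and I answer \emph{yes}. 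The substantial case is $c_v(G)<g(k,\ell)$, where I fix a minimum vertex cut $A$ with $|A|=c<g(k,\ell)$ and let $C_1,\dots,C_r$ be the components of $G-A$.

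In that case any Hamiltonian linkage, after the $c$ vertices of $A$ are deleted, breaks into subpaths each contained in a single component, since crossing between components forces a pass through $A$. I therefore enumerate the finitely many ways the sought linkage can thread through the cut: for each $a\in A$ I guess its (at most two) neighbours on the path carrying it, each neighbour being either another cut vertex or a concrete vertex of some $C_j$ (a \emph{stub}), together with enough bookkeeping to reconstruct which subpaths belong to the same global path and in which order. There are $n^{O(c)}$ such guesses. A consistent guess decomposes the task into one independent subproblem per component: cover $C_j$ by vertex-disjoint paths whose prescribed endpoints are the original terminals lying in $C_j$ together with the guessed stubs, paired as dictated by the threading. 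This is exactly a {\sc Hamiltonian-$\ell_j$-Linkage} instance on the induced subgraph $G[C_j]$, solved recursively (degenerate single-vertex visits are absorbed by allowing trivial paths). The original instance is a \emph{yes}-instance if and only if some guess makes every component subproblem a \emph{yes}-instance; the backward direction simply re-glues the component solutions along $A$ according to the threading.

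The main obstacle is keeping this polynomial, because $\ell_j$ can be as large as $\ell+c$, so the number of terminal pairs grows from level to level, and with it the threshold $g(k,\ell_j)$ and hence the separator sizes entering the $n^{O(c)}$ enumeration. The key observation that saves the day is that the recursion depth is bounded by $k$. Since the $C_j$ are pairwise non-adjacent, $G-A$ is an induced subgraph with $\alpha(G-A)=\sum_j\alpha(C_j)\le\alpha(G)$; as a cut produces $r\ge2$ non-empty components, each satisfies $\alpha(C_j)\le\alpha(G)-1$. Thus the independence number strictly decreases at every split, so along any root-to-leaf branch there are at most $k-1$ splits before the surviving subgraph is a clique or small enough for the base case. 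Consequently $\ell$ is multiplied by at most $1+\max\{k,10\}$ per level and stays bounded by $L:=\ell\,(1+\max\{k,10\})^{k-1}$, a constant for fixed $k$ and $\ell$; every separator ever used has size below $g(k,L)$, so each enumeration step costs only $n^{O(g(k,L))}$.

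Assembling the bounds, the recursion tree has depth at most $k-1$, branching at most $n^{O(g(k,L))}\cdot(k-1)$ per node (guesses times components), and polynomial work per node, so the total running time is $n^{f(k,\ell)}$ with $f(k,\ell)=O\!\big(k\cdot g(k,L)\big)<\infty$, as required. What remains is routine: verifying that a guessed threading can be checked for consistency in polynomial time (each cut vertex and stub used exactly once, the virtual paths correctly joining each $s_i$ to $t_i$), and that the base cases — small graphs and complete graphs ($\alpha=1$) — are decided correctly, the latter being Hamiltonian-$\ell'$-linked whenever they have at least $2\ell'$ vertices.
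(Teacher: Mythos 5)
Your proposal is correct and follows essentially the same route as the paper: the same dichotomy (high connectivity, where Theorem~\ref{thm:Ham-linked} applies, versus a cut $A$ of size below $g(k,\ell)$ whose threadings are enumerated in $n^{O(|A|)}$ ways), followed by recursion on the components, is exactly the paper's scenario-based Algorithm~\ref{alg:algorithm}, and your key bounds (independence number drops by one per split, at most $\ell+|A|$ terminal pairs per component) match the paper's recursion $f(k,\ell)\le 2g(k,\ell)+f(k-1,g(k,\ell)+\ell)$. The only differences are cosmetic: an explicit brute-force base case for small graphs, trivial paths in place of the paper's deletion of vertices sandwiched between two cut vertices, and the closed-form bound $L=\ell\,(1+\max\{k,10\})^{k-1}$ instead of the paper's recursive estimate of the exponent.
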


We will first describe the algorithm informally. Suppose we are given a $kK_1$-free graph $G$ and distinct vertices $s_1,\ldots,s_{\ell},t_1,\ldots,t_{\ell}$. We first compute the vertex connectivity $c_v(G)$ of $G$. If $c_v(G) \geq g(k,l)$, we answer ``yes" and quit, since by Theorem~\ref{thm:Ham-linked}, $G$ is Hamiltonian-$\ell$-linked, and hence a Hamiltonian linkage exists also for the vertices $s_1,\ldots,s_{\ell},t_1,\ldots,t_{\ell}$. 

Otherwise, we find and fix a vertex cut $A$ such that $|A| < g(k,l)$. Let $Q_1,\ldots,Q_s$ be the connected components of $G-A$. Since $G$ does not contain an independent of size $k$, we observe that $s < k$. On the other hand, $s\ge 2$ because $A$ is a cut. Consider a component $Q_j$. If $I_j$ is an independent set in $G[Q_j]$, adding one vertex per every other component of $G-A$ to $I_j$ results in an independent set in $G$ of size $|I_j|+s-1$. It follows that $\alpha(G[Q_j])<k-(s-1)\le k-1$ for every $j=1,2,\ldots,s$.

For any path $P$ in $G$, define its {\em scenario} $sc(P)$ as follows. Traverse the path from its starting point, say $\widetilde{s}$, to its end-point, say $\widetilde{t}$, and write down a reduced sequence of vertices consisting of the vertices of the cut $A$, their immediate neighbours  in the path $P$ and the first and last vertices $\widetilde{s}$ and $\widetilde{t}$.  For example, if $P=\widetilde{s},a,b,c,d,e,f,g,h,o,p,q,r,y,z,\widetilde{t}$ and $\widetilde{s},a,b,c\in Q_1$, $d,e,f\in A$, $g,h,o\in Q_2$, $p\in A$, $q\in Q_1$, $r\in A$, $y,z,\widetilde{t}\in Q_1$, then the scenario of $P$ is $sc(P)=\widetilde{s},c,d,e,f,g,o,p,q,r,y,\widetilde{t}$. The {\em blind scenario} $bsc(P)$ is obtained from the scenario by replacing vertices from components $Q_j$ (except for the first and last vertex) by the names of the components, and leaving only one occurrence of $Q_j$ from every two consecutive ones. In case of the example above, the blind scenario is $bsc(P)=\widetilde{s},Q_1,d,e,f,Q_2,p,Q_1,r,Q_1,\widetilde{t}$. See Figure~\ref{fig:scenario}.  

\begin{figure}
\center
\includegraphics[scale=1]{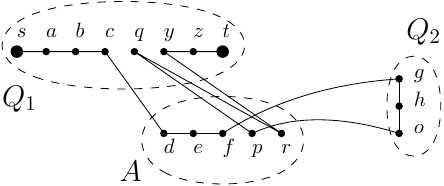}
\caption{An example of a scenario in the proof of Theorem \ref{thm:Ham-linkage-poly}.}
\label{fig:scenario}
\end{figure}

We will guess (i.e., try all possible) scenarios of the paths $P_1,P_2,\ldots,P_{\ell}$ for $s_1,\ldots,s_{\ell},$ $t_1,\ldots,t_{\ell}$ and for each such collection check if this collection of scenarios can be realized by a Hamiltonian linkage. There are some obvious necessary conditions to be satisfied - every vertex of $A$ must belong to exactly one $sc(P_i)$ (and appear only once in it), every vertex of $V(G)\setminus A$ belongs to at most one $sc(P_i)$ (and appears only once in it), and any two consecutive vertices in every scenario are adjacent in $G$.
We call a collection of scenarios {\em plausible} if these necessary conditions are fulfilled.

For a plausible collection of scenarios $sc(P_1),sc(P_2),\ldots,sc(P_{\ell})$, we process one by one the components of $G-A$. For each component $Q_j$ we note the pairs of vertices which occur consecutively in some scenario and we mark these two vertices to be linked in a recursively called {\sc Hamiltonian-$\ell'$-Linkage} subroutine on $G[Q_j]$. (There is one technicality here. If a pair of terminal 
vertices coincide, we simply delete this vertex from $Q_j$ for the subroutine.) If a Hamiltonian linkage  ${\cal P}=\{P_1,P_2,\ldots,P_{\ell}\}$ realizes a collection of scenarios, the restrictions of the paths from the linkage to a component $Q_j$ provide a Hamiltonian linkage for the recursive subroutine. And vice versa, if the subroutines for all components return the answer "yes", the union of their linkages provides a Hamiltonian linkage for $s_1,\ldots,s_{\ell},t_1,\ldots,t_{\ell}$ in $G$. The algorithm is formally described in pseudocode as   Algorithm~\ref{alg:algorithm} with an auxiliary subroutine Algorithm~\ref{alg:subroutine}. For its sake, we define the scenarios formally.

\begin{definition}
Let $G$ be a graph, let $A\subset V(G)$ be a vertex cut and let $Q_1,\ldots,Q_s$ be the connected components of $G-A$. Let $s_1,\ldots,s_{\ell}, t_1,\ldots,t_{\ell}$ be distinct vertices of $G$. A {\em scenario} for $s_i,t_i$ is a sequence of vertices of $G$ starting with $s_i$ and ending with $t_i$ such that no two consecutive elements belong to different components $Q_j,Q_{j'}$, $j\neq j'$, and no three consecutive elements belong to the same component $Q_j$ (in other words, between any two closest occurrences of vertices from $A$, there are at most 2 vertices from the components $Q_j$, and if they are 2, then they are from the same component). A {\em collection of scenarios} for $s_1,\ldots,s_{\ell}, t_1,\ldots,t_{\ell}$ is a set $\cal S$ of $\ell$ scenarios $S_1,\ldots,S_{\ell}$ such that for each $i$, $S_i$ is a scenario for $s_i,t_i$. A scenario collection ${\cal S}=\{S_1,\ldots,S_{\ell}\}$ is {\em plausible} if every vertex of $A$ belongs to exactly one scenario $S_i$ and appears only once in it, every vertex of $G-A$ belongs to at most one scenario $S_i$ and then appears only once in it, and in every scenario $S_i$, any two consecutive elements that do not both belong to the same component $Q_j$ are adjacent in $G$.  
\end{definition}

\begin{algorithm}
\caption{$HamLinkage(G,k,\ell, s_1,\ldots,s_{\ell},t_1,\ldots,t_{\ell})$.}\label{alg:algorithm}
\begin{algorithmic}[1]
\Require A graph $G$, a non-negative integer $\ell$, distinct vertices $s_1,\ldots,s_{\ell},t_1,\ldots,t_{\ell}\in V(G)$.
\Ensure \texttt{true} if $G$ contains a Hamiltonian linkage for $s_1,\ldots,s_{\ell},t_1,\ldots,t_{\ell}$, \texttt{false} otherwise.
\If{$\ell=0$ and $V(G)\neq\emptyset$}
    \State \Return \texttt{false}
\EndIf
\If{$\ell=0$ and $V(G) = \emptyset$}
    \State \Return \texttt{true}
\EndIf
\State \textbf{compute} the vertex connectivity $c_v(G)$ of $G$ and a minimum vertex cut $A\subseteq V(G)$
\If{$c_v(G)\ge g(k,\ell)$} 
		\State \Return \texttt{true}
	\Else
		\State \textbf{denote}  by $Q_j, j=1,2,\ldots,s$ the connected components of $G-A$;
		\State $HamLinked \gets$ \texttt{false}
		\ForAll {plausible scenario collections $\cal S$ $=\{S_1,S_2,\ldots,S_{\ell}\}$}
			\State $GoodScenario \gets$ \texttt{true}
			\For{$j = 1, \dots, s$}
				\State $Reduce(Q_j,{\cal S},Q',\ell', s'_1,\ldots,t'_{\ell'})$
				\If{$HamLinkage(G[Q'],k-1,\ell',s'_1,\ldots,t'_{\ell'})=$ \texttt{false}}
					\State $GoodScenario \gets$ \texttt{false}
				\EndIf
			\EndFor
			\If{$GoodScenario$}
					\State $HamLinked \gets$ \texttt{true}
			\EndIf		
		\EndFor 
		\If {$HamLinked$}
			\State \Return \texttt{true}
		\Else 
			\State \Return \texttt{false}
		\EndIf
\EndIf 
\end{algorithmic}
\end{algorithm}

\begin{algorithm}
\caption{$Reduce(Q,{\cal S},Q',\ell',s'_1,\ldots,t'_{\ell'})$.}\label{alg:subroutine}
\begin{algorithmic}[1]
\Require A connected component $Q$ of $G-A$, a plausible scenario collection ${\cal S}=\{S_1,\ldots,S_{\ell}\}$ for $s_1,\ldots,t_{\ell}$.
\Ensure  A subset $Q'\subseteq Q$, an integer $\ell' \geq 0$, distinct vertices $s'_1,\ldots,s'_{\ell'},t'_1,\ldots,t'_{\ell'}\in Q'$.
\State $Q' \gets Q$; $\ell' \gets 0$
\For{ $i=1, \ldots, \ell$}
	\ForAll {$u\in S_i\cap Q$ such that no neighbor of $u$ on $S_i$ belongs to $Q$}
		\State $Q'\gets Q'\setminus\{u\}$
	\EndFor	
	\ForAll { $u,v\in Q$ that are distinct and appear consecutively on $S_i$}
		\State $\ell'\gets \ell'+1$; $s'_{\ell'}\gets u$; $t'_{\ell'}\gets v$
	\EndFor	
\EndFor
\State \Return $(Q',\ell',s'_1,\ldots,t'_{\ell'})$

\end{algorithmic}
\end{algorithm}

\begin{lemma}
Algorithm~\ref{alg:algorithm} correctly answers if a $kK_1$-free graph contains a Hamiltonian linkage for the input vertices $s_1,\ldots,t_{\ell}$. 
\end{lemma}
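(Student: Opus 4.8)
The plan is to prove correctness by induction on $|V(G)|$; since every recursive call is made on a proper induced subgraph (each component $Q'$ omits the nonempty cut $A$ and, as there are at least two components, is a strict subset of $V(G)$) and simultaneously lowers the parameter to $k-1$, the recursion is well-founded and the statement may be assumed for all recursive calls. First I would dispose of the non-recursive branches. The two branches guarded by $\ell=0$ are immediate: an empty linkage covers exactly the empty vertex set, so \texttt{true} is returned precisely when $V(G)=\emptyset$. When $c_v(G)\ge g(k,\ell)$, the precondition $\alpha(G)<k$ together with Theorem~\ref{thm:Ham-linked} guarantees that $G$ is Hamiltonian-$\ell$-linked, so a Hamiltonian linkage exists for the prescribed $s_1,\dots,t_\ell$ and returning \texttt{true} is correct. (As in the proof of Theorem~\ref{thm:Ham-linked}, here $G$ has at least $10\ell+1>2\ell$ vertices, as the definition requires.)

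It remains to treat the low-connectivity branch, where the algorithm fixes a cut $A$ with $|A|<g(k,\ell)$ and recurses on the components $Q_1,\dots,Q_s$ of $G-A$. I would record the two structural facts the correctness rests on: selecting one vertex per component yields an independent set, so $s\le\alpha(G)<k$; and adjoining one vertex of every other component to an independent set of $G[Q_j]$ shows $\alpha(G[Q_j])<k-(s-1)\le k-1$. Hence every recursive call $HamLinkage(G[Q'],k-1,\ell',\dots)$ is made on an input satisfying the precondition for parameter $k-1$, so the induction hypothesis applies to it.

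The heart of the argument is the equivalence: $G$ admits a Hamiltonian linkage for $s_1,\dots,t_\ell$ if and only if some plausible scenario collection $\mathcal S$ is realizable, that is, all reduced subproblems produced by \emph{Reduce} return \texttt{true}. For the forward direction I would take a Hamiltonian linkage $\mathcal P=\{P_1,\dots,P_\ell\}$, set $S_i=sc(P_i)$, and check that $\{S_1,\dots,S_\ell\}$ is plausible: disjointness and coverage of $\mathcal P$ force each cut vertex onto exactly one scenario and each other vertex onto at most one, while consecutive scenario elements not lying in a common component are consecutive on the underlying path and therefore adjacent. Restricting $\mathcal P$ to a component $Q_j$ then yields exactly a Hamiltonian linkage of $G[Q']$ for the pairs recorded by \emph{Reduce}, the deleted vertices being precisely the single-vertex excursions of the paths into $Q_j$; by the induction hypothesis each recursive call returns \texttt{true}. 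For the converse, given a plausible $\mathcal S$ whose subproblems all succeed, I would stitch the per-component linkages (existing by the induction hypothesis) back along each scenario $S_i$, reinserting the deleted single-vertex visits, to build paths $P_1,\dots,P_\ell$: plausibility supplies adjacency across cut boundaries and at single-vertex visits, disjointness of the scenarios and of the per-component linkages makes the $P_i$ vertex-disjoint, and the partition of $V(G)$ into $A$, the single-vertex visits, and the sets $Q'$ guarantees that the resulting linkage is Hamiltonian.

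The main obstacle I anticipate is making this projection/lifting correspondence watertight. One must argue that \emph{Reduce} deletes exactly the vertices that are isolated one-vertex excursions of a path into a component and records exactly the link-pairs arising from maximal sub-path visits, and then that coverage is preserved in both directions with no vertex omitted or double-counted. The delicate bookkeeping lies in tracking the endpoints $s_i,t_i$ (which may sit in $A$ or inside a component), the degenerate case where two required link endpoints coincide (handled by the stated deletion technicality), and the interplay between the formal scenario constraints (at most two component vertices between consecutive cut vertices, and if two then from the same component) and the actual structure of path restrictions to a component.
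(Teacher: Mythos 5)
Your proposal is correct and takes essentially the same route as the paper's proof: the high-connectivity branch is settled by Theorem~\ref{thm:Ham-linked}, and the low-connectivity branch rests on the correspondence between Hamiltonian linkages and plausible, realizable scenario collections, with recursion on the components reduced by \emph{Reduce}. The paper's own proof is in fact terser---it justifies the \emph{Reduce} subroutine and asserts the equivalence---so your explicit induction on $|V(G)|$ and the two-directional projection/lifting argument are just a more detailed rendering of the same idea.
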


\begin{proof}
The correctness has been argued upon in the outline of the algorithm. We will justify the $Reduce$ subroutine. The correspondence between a path $P$ and its scenario $S$ is such that if two consecutive vertices of $S$ belong to the same component $Q_j$, they correspond to a path connecting these vertices in $G[Q_j]$ as a subpath of $P$. Therefore these two vertices are added as a pair to the terminal 
vertices for the recursive call. If there is only one vertex of $Q_j$ between two vertices from $A$ on $S$, say $a_1,q,a_2$ are consecutive vertices with $a_1,a_2\in A, q\in Q_j$, then $a_1,q,a_2$ is a subpath of $P$ and vertex $q$ cannot be used by any other path of the linkage. Therefore we delete this vertex from $Q_j$ for the recursive call. It is then clear that a collection of scenarios can be realized by a linkage of the terminal 
vertices if and only if it is plausible and the recursive calls on the reduced components all return value {\texttt{true}}. And that the vertices can be linked if and only if there exists at least one plausible scenario collection that can be realized by a linkage. 
\end{proof}

\begin{proof} {\bf (of Theorem~\ref{thm:Ham-linkage-poly})}
We will estimate the running time of Algorithm~\ref{alg:algorithm} and show by induction on $k$ the claim of the theorem. Suppose $G$ has $n$ vertices. For $k=2$, the input graph is complete and as such it does contain a Hamiltonian linkage. Hence $f(2,\ell)=2$, even if we do not trust the promise and check that $\alpha(G)<2$.

Now suppose $k>2$. The vertex connectivity of $G$ can be computed in time $O(n^4)$ \cite{henzinger2000computing}. 
If the connectivity $c_v(G)$ is less than $g(k,\ell)$, a small vertex cut $A$ is identified as a byproduct of the connectivity determining algorithm.

A rough estimate of the number of plausible collections of scenarios can go as follows. For a single pair of terminal 
vertices $s_i,t_i$, the number of blind scenarios is at most $|A|!\cdot s^{|A|-1}   \le (kg(k,\ell))^{g(k,\ell)}$. Thus the number of blind scenario collections is at most $(kg(k,\ell))^{\ell g(k,\ell)}$. If a blind scenario uses $h$ vertices from the cut $A$, we have at most $2h$ positions as placeholders for vertices from the components. As every vertex of $A$ appears in exactly one scenario, altogether we have at most $2|A|$ placeholders for vertices from the components, and each placeholder can be filled in at most $n$ ways. So all together we have at most $n^{2|A|}\cdot  (kg(k,\ell))^{\ell g(k,\ell)}\le n^{2g(k,\ell)}\cdot (kg(k,\ell))^{\ell g(k,\ell)}$ collections of scenarios. 

For each of them, we call $s\le k-1$ subroutines of Algorithm~\ref{alg:subroutine} for the components, and then call Algorithm~\ref{alg:algorithm} recursively for the reduced components. On the scenarios, we have together at most $|A|+\ell$ placeholder gaps between occurrences of vertices from $A$, and each such gap is either a single vertex that gets deleted from its component, or a pair of vertices which are added to the terminal 
vertices of this component. Hence reducing the components takes $O(g(k,\ell)+\ell)$ time. The recursive call takes time $O(n^{f(k-1,\ell')})$ by induction hypothesis, because $\alpha(G[Q'_i])\le k-2$ for every $i$. The $\ell'$ for the recursive call may be much bigger than $\ell$, but still can be bounded. Even if all placeholders were for the same component (which is not possible), we would have at most $|A|+\ell$ pairs of terminal 
vertices in this component, and thus $\ell'<g(k,\ell)+\ell$. The overall running time is thus $$O(n^4)+O(n^{2g(k,\ell)}\cdot (kg(k,\ell))^{\ell g(k,\ell)}\cdot (g(k,\ell)+\ell+sn^{f(k-1,g(k,\ell)+\ell)}))=O(n^{2g(k,\ell)+f(k-1,g(k,\ell)+\ell)}),$$
where the multiplicative constant in the $O$ notation depends on $k$ and $\ell$. Nevertheless, it follows by induction on $k$ that
$f(k,\ell)\le  2g(k,\ell)+f(k-1,g(k,\ell)+\ell)<\infty$.     
\end{proof}

\begin{theorem}\label{thm:Ham-Linked-poly}
For every $k>0,\ell>0$, the {\sc Hamiltonian-$\ell$-Linkedness} and {\sc $\ell$-Path-Cover } problems
can be solved in polynomial time  for input graphs of independence number smaller than $k$.  
\end{theorem}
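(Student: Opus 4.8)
The plan is to reduce both problems to polynomially many instances of {\sc Hamiltonian-$\ell$-Linkage}, which by Theorem~\ref{thm:Ham-linkage-poly} is solvable in time $O(n^{f(k,\ell)})$ on graphs of independence number smaller than $k$. The key observation is that for a fixed $\ell$, the number of ways to select the $2\ell$ distinguished vertices $s_1,\ldots,s_{\ell},t_1,\ldots,t_{\ell}$ from an $n$-vertex graph is only $O(n^{2\ell})$, so we can afford to try all of them.

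For {\sc Hamiltonian-$\ell$-Linkedness} I would first dispose of the degenerate case: if $|V(G)| < 2\ell$, then $G$ is not Hamiltonian-$\ell$-linked by definition, so we return \texttt{false}. Otherwise, $G$ is Hamiltonian-$\ell$-linked precisely when \emph{every} sequence of $2\ell$ distinct vertices admits a Hamiltonian linkage. Hence I would enumerate all $O(n^{2\ell})$ ordered sequences of distinct vertices, run the algorithm of Theorem~\ref{thm:Ham-linkage-poly} on each, and return \texttt{true} if and only if every call returns \texttt{true}.

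For {\sc $\ell$-Path-Cover} the same enumeration works but with the opposite quantifier: a Hamiltonian linkage exists for \emph{some} choice of $2\ell$ distinct endpoints exactly when at least one of the enumerated instances is positive. So I would return \texttt{true} as soon as any call returns \texttt{true}, and \texttt{false} if all of them fail (this includes the case $|V(G)| < 2\ell$, where no valid choice of $2\ell$ distinct vertices exists at all).

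In both cases the running time is the product $O(n^{2\ell}) \cdot O(n^{f(k,\ell)}) = O(n^{2\ell + f(k,\ell)})$, which is polynomial for every fixed pair $k,\ell$. There is no genuine obstacle here, since the heavy lifting is already done by Theorem~\ref{thm:Ham-linkage-poly}; the only points requiring a little care are the boundary condition imposed by the ``at least $2\ell$ vertices'' clause in the definition of Hamiltonian-$\ell$-linkedness, and the need to quantify over ordered sequences rather than unordered sets, so that the prescribed pairing $s_i \leftrightarrow t_i$ is respected — the bound $O(n^{2\ell})$ already accounts for all such orderings.
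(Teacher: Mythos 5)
Your proposal is correct and matches the paper's own proof essentially verbatim: both enumerate all $O(n^{2\ell})$ choices of the $2\ell$ endpoints, invoke the algorithm of Theorem~\ref{thm:Ham-linkage-poly} on each, and answer with the universal quantifier for {\sc Hamiltonian-$\ell$-Linkedness} and the existential one for {\sc $\ell$-Path-Cover}, giving the same $O(n^{2\ell+f(k,\ell)})$ bound. Your extra care about the $|V(G)|<2\ell$ boundary case and about ordered sequences respecting the pairing $s_i \leftrightarrow t_i$ is a minor refinement the paper leaves implicit.
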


\begin{proof}
Given a graph with $n\ge 2\ell$ vertices, we check all collections of $\ell$ pairs of terminal 
vertices, and for each such collection, we decide if they can be Hamiltonian linked or not by the algorithm of Theorem~\ref{thm:Ham-linkage-poly}. The graph is Hamiltonian-$\ell$-linked if and only if the answer is ``yes" for all the collections, and the graph allows a path cover by $\ell$ paths if and only if the answer is ``yes" for at least one of the collections. There are at most $n^{2\ell}$ such collections, and thus {\sc Hamiltonian-$\ell$-Linkedness} and {\sc $\ell$-Path-Cover} can be answered in time $O(n^{2\ell+f(k,\ell)})$, where $f(k,\ell)$ is the function from Theorem~\ref{thm:Ham-linkage-poly}.
\end{proof}

\begin{corollary}\label{cor:vsechnopolynomialni}
For every $k>0$, deciding the existence of a Hamiltonian cycle, a Hamiltonian path, or Hamiltonian connectedness of a $kK_1$-free graph can be performed in polynomial time. 
\end{corollary}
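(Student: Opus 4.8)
The plan is to reduce each of the three decision problems to the linkage problems already handled in Theorems~\ref{thm:Ham-linkage-poly} and~\ref{thm:Ham-Linked-poly}, specialized to $\ell=1$ throughout. First I would record the translation between the two ways of bounding the independence number: a graph is $kK_1$-free exactly when it contains no independent set of size $k$, i.e.\ when $\alpha(G)<k$, so the class of $kK_1$-free graphs coincides with the class of input graphs admitted by the two cited theorems for the value $k$.

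For the existence of a Hamiltonian path, I would observe that this is literally the {\sc $1$-Path-Cover} problem: a Hamiltonian $1$-linkage for a pair $s_1,t_1$ is a single path from $s_1$ to $t_1$ traversing every vertex, hence a Hamiltonian path, and {\sc $1$-Path-Cover} asks precisely whether such a pair exists. Likewise, Hamiltonian connectedness is by definition the {\sc Hamiltonian-$1$-Linkedness} problem. Both therefore fall directly under Theorem~\ref{thm:Ham-Linked-poly} with $\ell=1$, and hence are solvable in polynomial time on $kK_1$-free graphs.

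For the existence of a Hamiltonian cycle I would invoke the standard fact, already noted in the introduction, that a graph is Hamiltonian if and only if it contains a Hamiltonian path whose two endpoints are adjacent. Thus I would loop over all edges $uv\in E(G)$ and, for each, call the algorithm of Theorem~\ref{thm:Ham-linkage-poly} for {\sc Hamiltonian-$1$-Linkage} to test whether a Hamiltonian path joins $u$ and $v$; the graph is Hamiltonian precisely when at least one such test succeeds. Since there are at most $\binom{n}{2}$ edges and each test runs in time $O(n^{f(k,1)})$, the total running time is $O(n^{2+f(k,1)})$, which is polynomial for every fixed $k$.

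There is no genuine obstacle in this corollary: it is a routine packaging of the already-established polynomial algorithms, all instantiated at $\ell=1$. The only points demanding care are matching the ``$\alpha(G)<k$'' and ``$kK_1$-free'' conventions, and, in the Hamiltonian cycle case, remembering to restrict the endpoint search to adjacent pairs rather than to all pairs, so that a successfully found Hamiltonian path actually closes up into a cycle.
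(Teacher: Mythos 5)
Your proposal is correct and is essentially the paper's own argument: instantiate the linkage machinery at $\ell=1$ and test pairs of vertices, restricting to adjacent pairs for the Hamiltonian cycle case. The only cosmetic difference is that you route Hamiltonian path and Hamiltonian connectedness through Theorem~\ref{thm:Ham-Linked-poly} (i.e.\ {\sc $1$-Path-Cover} and {\sc Hamiltonian-$1$-Linkedness}) while the paper performs the all-pairs loop over Theorem~\ref{thm:Ham-linkage-poly} directly, but since the proof of Theorem~\ref{thm:Ham-Linked-poly} is exactly that loop, the two arguments coincide.
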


\begin{proof}
Set $\ell=1$ in Theorem~\ref{thm:Ham-linkage-poly} and check all pairs of vertices of the input graph if this pair is Hamiltonian linked. This can be done in time $O(n^{2+f(k,1)})$. The graph contains a Hamiltonian path if the answer is ``yes" for at least one pair of vertices, it contains a Hamiltonian cycle if the answer is ``yes" for at least one pair of adjacent vertices, and it is Hamiltonian connected if the answer is ``yes" for all the pairs. 
\end{proof}

\begin{corollary}\label{cor:pathcover}
For every $k>0$, the path cover number of a $kK_1$-free graph can be determined in polynomial time.
\end{corollary}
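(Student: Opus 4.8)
The plan is to reduce the computation of the path cover number to deciding the existence of a Hamiltonian path, which is already polynomial for $kK_1$-free graphs by Corollary~\ref{cor:vsechnopolynomialni}. The key observation is a classical equivalence: for a connected graph $G$ and an integer $p\ge 1$, we have $\mathrm{pc}(G)\le p$ if and only if the graph $G_p$, obtained from $G$ by adding a clique on $p-1$ new vertices $w_1,\ldots,w_{p-1}$, each made adjacent to all of $V(G)$, admits a Hamiltonian path. The smallest $p$ for which this holds is then exactly $\mathrm{pc}(G)$.

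First I would prove this equivalence. For the forward direction, given a cover of $G$ by paths $A_1,\ldots,A_p$ (allowing trivial one-vertex paths), the concatenation $A_1,w_1,A_2,w_2,\ldots,w_{p-1},A_p$ is a Hamiltonian path of $G_p$, because each $w_i$ is adjacent to every vertex of $G$ and hence to the relevant endpoints of $A_i$ and $A_{i+1}$. Conversely, given a Hamiltonian path of $G_p$, deleting the $p-1$ vertices $w_1,\ldots,w_{p-1}$ splits it into at most $p$ (possibly empty or single-vertex) subpaths, each of which is a path in $G$; discarding the empty ones yields a path cover of $G$ of size at most $p$. This is precisely the point where the bookkeeping of trivial paths is handled automatically, and it is the only place that needs care: the {\sc $\ell$-Path-Cover} problem as defined insists on $2\ell$ distinct endpoints, whereas the path cover number counts one-vertex paths as well, so a direct appeal to Theorem~\ref{thm:Ham-Linked-poly} would mismatch the definition. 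The universal-vertex gadget sidesteps this issue cleanly.

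Next I would observe that the reduction stays inside the class of $kK_1$-free graphs. Since each new vertex is adjacent to all of $V(G)$ and the new vertices form a clique, every independent set of size at least two lies entirely in $V(G)$, so $\alpha(G_p)=\alpha(G)<k$; in particular $G_p$ is again $kK_1$-free and connected, and $|V(G_p)|\le 2n$.

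Finally, the algorithm itself: the property ``$G_p$ has a Hamiltonian path'' is monotone in $p$ (if it holds for $p$ it holds for $p+1$), and $1\le \mathrm{pc}(G)\le n$, so $\mathrm{pc}(G)$ is the least threshold $p$ for which $G_p$ has a Hamiltonian path. I would locate this threshold by testing $p=1,2,\ldots,n$ (or by binary search), deciding each instance in time $O(n^{2+f(k,1)})$ by applying Corollary~\ref{cor:vsechnopolynomialni} to $G_p$. Since only $O(n)$ (or $O(\log n)$) Hamiltonian-path tests are performed, the total running time is polynomial. I do not expect a genuine obstacle here; the only subtlety is the treatment of trivial one-vertex paths discussed above, which the universal-vertex construction resolves, and the routine check that the construction preserves bounded independence number.
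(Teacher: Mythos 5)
Your proof is correct, but it takes a genuinely different route from the paper. The paper first observes that $\mathrm{pc}(G)\le\alpha(G)<k$ (the endpoints of an optimal cover form an independent set, since two adjacent endpoints would let you concatenate their paths), and then simply calls the {\sc $\ell$-Path-Cover} algorithm of Theorem~\ref{thm:Ham-Linked-poly} on $G$ itself for $\ell=1,\ldots,k-1$, returning the smallest $\ell$ with an affirmative answer; you instead use the classical universal-vertex gadget to reduce the whole computation to Hamiltonian path tests on modified graphs $G_p$, invoking only Corollary~\ref{cor:vsechnopolynomialni}. Your route buys two things. First, the subtlety you flag is real, not hypothetical: {\sc $\ell$-Path-Cover} as defined in the paper demands $2\ell$ distinct endpoints, hence every path in such a cover has at least two vertices, while $\mathrm{pc}(G)$ counts single-vertex paths; for $G=K_{1,3}$ (which is $4K_1$-free) one has $\mathrm{pc}(G)=2$, yet {\sc $\ell$-Path-Cover} answers negatively for every $\ell$, so the paper's procedure as literally written returns no answer on this input. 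Your gadget absorbs singleton paths automatically (an $A_i$ may be a single vertex flanked by universal vertices), so this edge case disappears. Second, your running time involves only $f(k,1)$, giving $O(n^{3+f(k,1)})$ with linear search, whereas the paper's bound involves $f(k,k-1)$ together with the $n^{2(k-1)}$ factor from Theorem~\ref{thm:Ham-Linked-poly}, which is far larger since $f$ grows with $\ell$. What the paper's approach buys in exchange is brevity given the machinery already developed --- only $k-1$ oracle calls, no graph modification --- plus the independently interesting bound $\mathrm{pc}(G)\le\alpha(G)$, which your argument does not need (though you could use it to cap your search at $p\le k-1$ instead of $p\le n$).
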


\begin{proof}
It is obvious that $\mathrm{pc}(G)\le \alpha(G)$. (Consider an optimal collection of $\mathrm{pc}(G)$ paths that cover all vertices and take one end-vertex from each of the paths. These vertices must form an independent set, since if any two of them were adjacent, these two paths could be replaced by their concatenation to form a cover by $\mathrm{pc}(G)-1$ paths. Thus $\alpha(G)\ge \mathrm{pc}(G)$.) If we assume that $G$ is $kK_1$-free, it suffices to call {\sc $\ell$-Path-Cover} on $G$ for $\ell=1,2,\ldots,k-1$ and return the smallest value of $\ell$ for which we get an affirmative answer. Hence the path cover number can be determined in time $O(k\cdot n^{f(k,k-1)})=  O(n^{f(k,k-1)}).$ 
\end{proof}

\section{Connection to $L(2,1)$-labelling}\label{sec:L21}

The so called Frequency Assignment Problem is  an area of problems motivated by assigning frequencies in mobile networks. One intensively studied variant is the so called distance constrained labelling, and in particular the $L(2,1)$-labelling. 

\begin{definition}
An \emph{$L(2,1)$-labelling} of a graph $G$ is a function $f$ from $V(G)$ to non-negative integers such that 
\begin{itemize}
\item $|f(u)-f(v)| \geq 2$ if $u$ and $v$ are adjacent, and 
\item $|f(u)-f(v)| \geq 1$ if the shortest path from $u$ to $v$ is of length 2. 
\end{itemize}
\end{definition}

A \emph{$k$-$L(2,1)$-labelling} is an $L(2,1)$-labelling such that there is no label greater than $k$. The \emph{$L(2,1)$-labelling number} of $G$, denoted by $\lambda(G)$, is the smallest possible $k$ such that $G$ has a $k$-$L(2,1)$-labelling.  

This notion was formally introduced by Griggs and Yeh in~\cite{griggs1992labelling}, where the first results and open questions were presented. The problem is rather interesting from the computational complexity point of view. Griggs and Yeh conjectured that it is NP-hard to determine $\lambda(G)$ even if $G$ is a tree. This is not the case, as shown by Chang and Kuo in~\cite{cha62} where a polynomial time algorithm for determining $\lambda(G)$ of trees is presented. Rather surprisingly, this algorithm does not generalize to graphs of bounded tree-width and $L(2,1)$-labelling belongs to a handful of problems polynomially solvable for trees but NP-hard already on graphs of tree-width 2, as shown in~\cite{FGK05}. An interesting connection to Hamiltonian paths and path covers of graphs was proved in~\cite{ge94}.    
Here and in the rest of the section, $\overline{G}$ denotes the complement of a graph $G$.

\begin{proposition}\cite{ge94} \label{prop:1} Let $G$ be a graph with $n$ vertices, then 
\begin{itemize}
\item $\lambda(G) \leq n-1$ if and only if $\mathrm{pc}(\overline{G}) = 1$,
\item $\lambda(G) = n+r-2$ if and only if $\mathrm{pc}(\overline{G}) = r$, where $r$ is an integer, $r \geq 2$.
\end{itemize}
\end{proposition}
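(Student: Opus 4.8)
The plan is to transfer the labelling problem on $G$ into a covering problem on $\overline{G}$ by reading an $L(2,1)$-labelling as a way of arranging the vertices of $G$ along the integer line. First I would observe the key translation: a function $f\colon V(G)\to\{0,1,\ldots,k\}$ is an $L(2,1)$-labelling precisely when (i) vertices adjacent in $G$ receive labels differing by at least $2$, and (ii) vertices at distance exactly $2$ in $G$ receive distinct labels. Both constraints are most naturally read on the complement. Sort the vertices by their labels and inspect a pair $u,v$ whose labels are \emph{consecutive} among the used labels, differing by exactly $1$. Such a pair cannot be adjacent in $G$ (that would force a gap of at least $2$), so $uv\in E(\overline{G})$; moreover they cannot be at distance $2$ in $G$ either, so in fact they are adjacent in $\overline{G}$. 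Conversely, a gap of size at least $2$ between consecutive used labels imposes no constraint at all. This is the engine of the whole correspondence.

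Next I would make the counting precise. Suppose $G$ has $n$ vertices and $f$ is a $k$-$L(2,1)$-labelling. Because every two vertices sharing a label would have to be at distance at least $3$ in $G$, one has to be slightly careful, but the cleanest route is to first argue that an optimal labelling may be taken to be \emph{injective} whenever $G$ is connected (or, more robustly, to track repeated labels as forced extra gaps). Listing the used labels in increasing order partitions the vertex order into maximal runs of consecutive integers; within each run, successive vertices form edges of $\overline{G}$, i.e.\ each run is a path in $\overline{G}$. If the labels used are $\ell_1<\ell_2<\cdots$ and we let $r$ be the number of maximal runs, then the paths corresponding to the runs form a path cover of $\overline{G}$ of size $r$, while the largest label satisfies $\lambda(G)=\ell_{\max}\ge (n-1)+(r-1)$, the extra $r-1$ accounting for one forced unit gap between consecutive runs. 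Conversely, given a path cover of $\overline{G}$ by $r$ paths, I would lay the paths end to end along the integers, assigning consecutive labels within a path and inserting a gap of exactly $2$ between the last vertex of one path and the first of the next; the distance-$2$ condition in $G$ needs to be checked at these junctions, and this is where the argument must be done carefully.

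For the first bullet, $\mathrm{pc}(\overline{G})=1$ means $\overline{G}$ has a Hamiltonian path $v_1v_2\cdots v_n$; labelling $v_i$ with $i-1$ uses labels $0,\ldots,n-1$ with no forced gaps, giving $\lambda(G)\le n-1$, and the converse direction recovers the single run. For the second bullet, the $r$ runs contribute $n$ labels spread over an interval of length at least $(n-1)+(r-1)=n+r-2$, and the laying-out construction shows this bound is achieved, yielding $\lambda(G)=n+r-2$ exactly when $\mathrm{pc}(\overline{G})=r\ge 2$.

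The main obstacle I anticipate is the treatment of \emph{distance-$2$} constraints at the boundaries between runs, together with the handling of repeated labels. The run decomposition cleanly encodes adjacency in $G$ (the distance-$1$ constraint), but the weaker distance-$2$ constraint must be reconciled with the freedom to reuse a label or to insert only a size-$2$ gap between runs. Verifying that a gap of exactly $2$ never violates the distance-$2$ condition in $G$ (equivalently, that the construction respects all second-neighbour distinctness) is the delicate step, and I would expect to lean on the structure of $\overline{G}$ here, possibly arguing that an optimal labelling can always be normalised to an injective one so that the distance-$2$ condition is subsumed. Establishing that normalisation, and thereby decoupling the two constraints, is the crux on which the equivalence with $\mathrm{pc}(\overline{G})$ rests.
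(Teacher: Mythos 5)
The paper itself does not prove this proposition---it is quoted from \cite{ge94}---so your proposal has to stand on its own, and it has a genuine gap exactly at the point you flag as the crux. Your lower-bound direction (from a labelling to a path cover) rests on two claims: (i) for an optimal labelling, $\lambda(G)=\ell_{\max}\ge (n-1)+(r-1)$ where $r$ is the number of maximal runs, and (ii) an optimal labelling can be normalised to an injective one whenever $G$ is connected. Both are false. Take $G=P_6$, the path $v_1v_2\cdots v_6$, which is connected. The assignment $f(v_1),\ldots,f(v_6)=0,2,4,0,2,4$ is a valid $L(2,1)$-labelling (equal labels occur only at distance $3$), and since every path on at least five vertices has $\lambda=4$, this labelling is optimal: $\lambda(P_6)=4$. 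Any injective labelling of a $6$-vertex graph has span at least $5$, so \emph{no} optimal labelling of this connected graph is injective, killing (ii). For this optimal labelling the sorted order has three runs, $\{v_1,v_4\},\{v_2,v_5\},\{v_3,v_6\}$, so (i) would assert $4\ge 5+2$, which is absurd. Your bullet-1 converse ("the converse direction recovers the single run") fails on the same example: the span is at most $n-1=5$, yet the run decomposition yields three paths, not a Hamiltonian path of $\overline{G}$. Repeated labels are not ``forced extra gaps''; they are precisely the mechanism by which connected graphs achieve spans far below $n-1$, and a correct proof must extract a small path cover of $\overline{G}$ from, or despite, a highly non-injective labelling.

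What is missing is the argument that handles equal labels. The usable facts are: two vertices with the same label are at distance at least $3$ in $G$, hence adjacent in $\overline{G}$, and---crucially---every other vertex is adjacent in $\overline{G}$ to at least one of them, since otherwise it would be a common $G$-neighbour. In the $P_6$ example this is what lets one splice the three runs into the single Hamiltonian path $v_1,v_4,v_2,v_5,v_3,v_6$ of $\overline{P_6}$. Turning such splicing into a general counting argument, showing that every coincidence of labels can be traded against a run boundary so that one still obtains $\mathrm{pc}(\overline{G})\le \max\{1,\lambda(G)-n+2\}$, is the actual content of the theorem in \cite{ge94}, and it is absent from your proposal. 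By contrast, your construction direction (laying out a path cover with gaps of $2$, giving $\lambda(G)\le n+r-2$, resp.\ $\le n-1$ when $r=1$) is correct, and the worry you raise about distance-$2$ pairs at the junctions is vacuous: that labelling is injective by construction, so the distance-$2$ condition, which only demands distinct labels, holds automatically. A further small slip: vertices whose labels differ by exactly $1$ can perfectly well be at distance $2$ in $G$; what you need, and all you need, is that they are non-adjacent in $G$.
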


A closely related and also studied problem requires that no label is used repeatedly.

\begin{definition}
An $L'(2,1)$-labelling of a graph $G$ is an $L(2,1)$-labelling where the function $f$ is injective.
\end{definition}

Similarly, a \emph{$k$-$L'(2,1)$-labelling} is an $L'(2,1)$-labelling such that there is no label greater than $k$. The \emph{$L'(2,1)$-labelling number} of $G$, denoted by $\lambda'(G)$, is the smallest possible $k$ such that $G$ has a $k$-$L'(2,1)$-labelling. It is clear that $\lambda'(G)\ge n-1$ for any graph with $n$ vertices. The following theorem is a consequence of the proof of Proposition~\ref{prop:1} as presented in~\cite{ge94}.

\begin{proposition} \label{prop:2} Let $G$ be a graph with $n$ vertices, then 
\begin{itemize}
\item $\lambda'(G) = n-1$ if and only if $\mathrm{pc}(\overline{G}) = 1$,
\item $\lambda'(G) = n+r-2$ if and only if $\mathrm{pc}(\overline{G}) = r$, where $r$ is an integer, $r \geq 2$.
\end{itemize}
\end{proposition}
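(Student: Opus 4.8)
The plan is to establish the single exact formula
\[
\lambda'(G) = n + \mathrm{pc}(\overline{G}) - 2,
\]
from which both bullets follow immediately: the first is the case $\mathrm{pc}(\overline{G})=1$ and the second the case $\mathrm{pc}(\overline{G})=r\ge 2$. The first step is the simplification coming from injectivity. In any $L'(2,1)$-labelling the function $f$ is injective, so any two vertices already receive distinct labels; hence the distance-$2$ constraint $|f(u)-f(v)|\ge 1$ is automatically satisfied. Thus an $L'(2,1)$-labelling is nothing more than an injective assignment of non-negative integers in which every pair of adjacent vertices of $G$ differs by at least $2$.

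Next I would pass to the sorted order of the labels. Given a $k$-$L'(2,1)$-labelling, list the vertices as $v_0,v_1,\dots,v_{n-1}$ in increasing order of their labels; we may assume the smallest label is $0$, and then the largest label equals $\sum_{i=0}^{n-2}(f(v_{i+1})-f(v_i))$. Each consecutive gap is at least $1$ because the labels are distinct integers, and at least $2$ precisely when $v_i v_{i+1}\in E(G)$, equivalently $v_i v_{i+1}\notin E(\overline{G})$. Consequently the largest label equals $(n-1)+b$, where $b$ is the number of consecutive pairs in the ordering that are \emph{not} edges of $\overline{G}$ (the \emph{breaks}). Minimising the largest label over all labellings therefore reduces to minimising the number of breaks over all orderings of $V(G)$.

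The heart of the argument is the correspondence between orderings and path covers of $\overline{G}$. In any ordering, the consecutive pairs that \emph{are} edges of $\overline{G}$ form a disjoint union of paths of $\overline{G}$ covering all vertices, and the number of such paths is exactly $b+1$; hence $\mathrm{pc}(\overline{G})\le b+1$ for every ordering. Conversely, concatenating the paths of a minimum path cover of $\overline{G}$ produces an ordering whose only possible breaks occur at the $\mathrm{pc}(\overline{G})-1$ junctions between consecutive paths, so some ordering has $b\le \mathrm{pc}(\overline{G})-1$. Together these give $\min_{\text{orderings}} b = \mathrm{pc}(\overline{G})-1$, and the extremal ordering is realised by an actual labelling: set each non-break gap to $1$ and each break gap to $2$. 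This satisfies every adjacency constraint, since two vertices adjacent in $G$ that are consecutive in the ordering sit at a break gap of exactly $2$, while nonconsecutive ones are separated by at least two unit gaps and hence differ by at least $2$. Substituting gives $\lambda'(G)=(n-1)+(\mathrm{pc}(\overline{G})-1)=n+\mathrm{pc}(\overline{G})-2$, and the two bullets follow.

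The main obstacle I anticipate is pinning down this path-cover correspondence tightly in both directions and confirming achievability: one must verify that the minimising ordering genuinely decomposes into a path cover, that a minimum path cover genuinely yields an ordering with no superfluous breaks, and that the resulting integer sequence is a bona fide $L'(2,1)$-labelling rather than merely a lower-bound witness. Compared with the non-injective Proposition~\ref{prop:1}, the essential gain is that injectivity forces exactly $n$ distinct labels and thereby upgrades the inequality $\lambda(G)\le n-1$ into the equality $\lambda'(G)=n-1$ in the case $\mathrm{pc}(\overline{G})=1$; for $r\ge 2$ the two statements coincide.
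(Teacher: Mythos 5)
Your proof is correct, and it is essentially the same argument as the one the paper relies on: the paper does not prove Proposition~\ref{prop:2} itself but derives it from the proof of Proposition~\ref{prop:1} in~\cite{ge94}, which rests on exactly your correspondence between vertex orderings (injectivity reduces the labelling to consecutive gaps of size $1$ or $2$) and path covers of $\overline{G}$, yielding $\lambda'(G)=n+\mathrm{pc}(\overline{G})-2$. The only cosmetic slip is that for an arbitrary labelling the largest label is \emph{at least} $(n-1)+b$ rather than equal to it, but since you separately verify achievability of the canonical gap assignment, the minimisation argument goes through unchanged.
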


The connection to Hamiltonicity and path covers is a strong tool when proving NP-hardness of $L(2,1)$-labelling restricted to special graph classes.
Let ${\cal G }$ be some graph class. 
If the problem \textsc{HamiltonianPath} is NP-complete on ${\cal G }$, then so is \textsc{PathCover} on ${\cal G }$ and hence, \textsc{$L(2,1)$-labelling} on ${\cal \overline  G }$ is also NP-complete. This line of reasoning was used in~\cite{BKTL00} to show that {\sc $L(2,1)$-Labelling} is NP-hard for planar graphs, bipartite graphs or chordal graphs. In Theorem~\ref{thm:L21} we will present a complete characterization of the complexity of this problem for $H$-free graphs, and an almost complete characterization for the  {\sc $L'(2,1)$-Labelling} problem in Theorem~\ref{thm:Lprime21} (we fully determine the complexity for triangle-free $H$). 
The following observation is straightforward, but we find it useful to be expressed formally.

\begin{observation}
Let $\textsc{P}$ be a computational problem. 
\begin{itemize}
\item
If $\textsc{P}$ restricted to the class of $H$-free graphs is polynomial time solvable, then for each induced subgraph $H'$ of $ H$, $\textsc{P}$ restricted to the class of $H'$-free graphs is also polynomial time solvable.

\item If $\textsc{P}$ restricted to the class of $H$-free graphs is NP-hard, then for each $H'$ which contains $H$ as an induced subgraph,  $\textsc{P}$ restricted to the class of $H'$-free graphs is also NP-complete.
\end{itemize}
\end{observation}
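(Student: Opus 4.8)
The plan is to reduce both statements to a single set-theoretic fact about forbidden induced subgraphs: whenever $H'$ is an induced subgraph of $H$, the class of $H'$-free graphs is contained in the class of $H$-free graphs. Once this containment is in hand, each bullet follows from the elementary principle that polynomial-time solvability is inherited by subclasses of instances, while NP-hardness is inherited by superclasses.

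First I would establish the containment by contraposition. Suppose $H'$ is an induced subgraph of $H$, and let $G$ be any graph that is not $H$-free, so that $G$ contains a vertex subset $W$ with $G[W]\cong H$. Since $H'$ is an induced subgraph of $H$, there is a subset $W'\subseteq W$ with $G[W']\cong H'$, witnessing that $G$ is not $H'$-free either. Taking the contrapositive, every $H'$-free graph is $H$-free, i.e.\ $\{G : G \text{ is } H'\text{-free}\}\subseteq\{G : G \text{ is } H\text{-free}\}$.

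For the first bullet, I would then assume $\textsc{P}$ is solvable in polynomial time on $H$-free graphs and take $H'$ to be an induced subgraph of $H$. By the containment just proved, every $H'$-free input is in particular $H$-free, so the same polynomial-time algorithm correctly decides $\textsc{P}$ on all $H'$-free instances. For the second bullet, I would assume $\textsc{P}$ is NP-hard on $H$-free graphs and take $H'$ to contain $H$ as an induced subgraph, i.e.\ $H$ is an induced subgraph of $H'$; applying the containment with the roles of $H$ and $H'$ exchanged shows that the class of $H$-free graphs is contained in the class of $H'$-free graphs. Hence any NP-hardness reduction producing $H$-free instances also produces $H'$-free instances, and $\textsc{P}$ stays NP-hard on $H'$-free graphs; membership in NP for the restricted problem transfers from the (assumed) NP-membership of $\textsc{P}$ on general graphs, giving NP-completeness.

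The argument involves no genuine obstacle; the only points requiring care are the \emph{direction} of the containment (freeness is antitone with respect to the induced-subgraph order on the forbidden graph) and, for the second bullet, recording that NP-completeness additionally requires NP-membership, which passes trivially to any subclass of instances.
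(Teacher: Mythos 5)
Your proof is correct and is exactly the argument the paper takes for granted: the paper states this as an unproven ``straightforward'' observation, and your route---the antitone containment (every $H'$-free graph is $H$-free when $H'$ is an induced subgraph of $H$) combined with inheritance of polynomial-time solvability by subclasses of instances and of NP-hardness by superclasses---is the intended one. You are also right, and slightly more careful than the paper's own wording, in noting that the NP-completeness claimed in the second bullet additionally requires NP-membership of \textsc{P}, which holds for the problems the paper applies this to and transfers trivially to the restricted instance classes.
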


\begin{theorem}\label{thm:L21}
Let $H$ be a graph. If $H$ is an induced subgraph of $ P_4$ (the path of length 3), then \textsc{$L(2,1)$-labelling} restricted to the class of $H$-free graphs is polynomial-time solvable, and  it is NP-hard otherwise.
\end{theorem}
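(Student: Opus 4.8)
The plan is to prove the two directions of the dichotomy separately, in both cases passing between an $L(2,1)$-labelling of $G$ and a path cover of $\overline{G}$ (Propositions~\ref{prop:1} and~\ref{prop:2}) and using the Observation to propagate each direction from one boundary case to all others. For the polynomial direction it suffices, by the Observation, to treat the single maximal case $H=P_4$, since every induced subgraph of $P_4$ defines a subclass of the $P_4$-free graphs (the cographs). Given a cograph $G$, I would first note that distinct connected components impose no $L(2,1)$-constraints on each other, so $\lambda(G)=\max_C\lambda(G[C])$ over the components $C$, reducing to the connected case. The decisive structural fact is that a connected cograph is a join of two cographs and hence has diameter at most $2$; thus every two vertices lie at distance at most $2$, so in any $L(2,1)$-labelling all labels are forced to be pairwise distinct. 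On connected cographs the $L(2,1)$- and $L'(2,1)$-constraints therefore coincide, giving $\lambda(G)=\lambda'(G)$. Now Proposition~\ref{prop:2} pins down $\lambda'(G)$ exactly from $\mathrm{pc}(\overline{G})$ in \emph{every} regime (whereas Proposition~\ref{prop:1} determines $\lambda(G)$ only when $\mathrm{pc}(\overline{G})\ge 2$), and $\mathrm{pc}(\overline{G})$ is polynomial because $\overline{G}$ is again a cograph, hence a cocomparability graph on which minimum path cover is polynomial~\cite{damaschke1991finding}. This computes $\lambda(G)$, and so solves $L(2,1)$-labelling, in polynomial time on all cographs.

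For the hardness direction, the Observation reduces the task to proving NP-hardness on $H_0$-free graphs for each graph $H_0$ that is \emph{minimal} with respect to not being an induced subgraph of $P_4$: every $H$ that is not an induced subgraph of $P_4$ contains such an $H_0$, and hardness on the smaller class of $H_0$-free graphs transfers to the larger class of $H$-free graphs. Using $R(3,3)=6$, any minimal obstruction other than $3K_1$ and $K_3$ is triangle-free with independence number at most $2$ and so has at most five vertices; a short case analysis then gives the complete list $3K_1,\,K_3,\,C_4,\,2K_2,\,C_5$. The three \emph{bipartite} obstructions $3K_1,C_4,2K_2$ I would handle by the complement reduction from Proposition~\ref{prop:1}: deciding whether $\lambda(G)\le|V(G)|-1$ for an $H_0$-free graph $G$ is exactly deciding whether $\overline{G}$ has a Hamiltonian path, where $\overline{G}$ ranges over the $\overline{H_0}$-free graphs. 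Since Hamiltonian path is NP-complete on triangle-free ($=\overline{3K_1}$-free) graphs~\cite{mu96} and on split graphs, and split graphs form a self-complementary class that is $\{2K_2,C_4,C_5\}$-free, the hard instances can be kept inside the $C_4$-free and the $2K_2$-free graphs, yielding NP-hardness there.

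The two \emph{non-bipartite} obstructions $K_3$ and $C_5$ are the crux, and $K_3$ is exactly where the complement reduction breaks down and is therefore the main obstacle: its complement $3K_1$ defines a class on which path cover is \emph{polynomial} by Corollary~\ref{cor:pathcover}, so Proposition~\ref{prop:1} produces no hardness whatsoever, and the entire difficulty of $L(2,1)$-labelling on $K_3$-free graphs lives in the regime $\lambda(G)\le|V(G)|-1$ that Proposition~\ref{prop:1} leaves undetermined. I would circumvent this with a direct result instead of a reduction through path cover: $L(2,1)$-labelling is already NP-hard on bipartite graphs~\cite{BKTL00}, and bipartite graphs are both $K_3$-free and $C_5$-free, so $L(2,1)$-labelling is NP-hard on $K_3$-free and on $C_5$-free graphs. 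This settles all five minimal obstructions, and via the Observation every $H$ that is not an induced subgraph of $P_4$, completing the claimed dichotomy.
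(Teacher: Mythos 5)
Your proof is correct, but it takes a genuinely different route from the paper's in both directions. For the polynomial side, the paper simply invokes Chang and Kuo's algorithm for cographs \cite{cha62}; you instead give a self-contained argument: connected cographs are joins, hence have diameter at most $2$, hence every $L(2,1)$-labelling is injective and $\lambda=\lambda'$, and then Proposition~\ref{prop:2} reduces everything to computing $\mathrm{pc}(\overline{G})$ for the cograph $\overline{G}$. This works, with one citation-level caveat: \cite{damaschke1991finding} is used in this paper only for deciding the \emph{existence} of a Hamiltonian path in cocomparability graphs, whereas you need the minimum path cover number; that quantity is indeed polynomial-time computable (via the bump number, or trivially on cographs via the cotree), so the fact is true, but it deserves a correct citation or a short proof. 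For the hardness side, the paper case-splits on whether $H$ contains a triangle, an induced cycle of length at least $4$, or $2K_2$, importing NP-hardness of $L(2,1)$-labelling on bipartite, chordal \emph{and} split graphs \cite{bo04}, and falls back on the triangle-free Hamiltonian path reduction only for the residual $3K_1$ case; you instead isolate the five minimal non-subgraphs of $P_4$ ($3K_1$, $K_3$, $C_4$, $2K_2$, $C_5$, correctly justified via $R(3,3)=6$) and need only \emph{one} direct $L(2,1)$-hardness result (bipartite graphs, covering $K_3$ and $C_5$), handling $C_4$ and $2K_2$ by Hamiltonian-path hardness on split graphs pushed through Proposition~\ref{prop:1} and the self-complementarity of split graphs, while your $3K_1$ case coincides with the paper's. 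Your treatment uses fewer black boxes, makes the obstruction structure explicit, and nicely explains why $K_3$ cannot be handled through path covers at all (by the paper's own Corollary~\ref{cor:pathcover}, path cover is polynomial on $3K_1$-free graphs); the paper's proof is shorter because it leans on stronger known hardness results. Both establish the same dichotomy.
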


\begin{proof}
\textsc{$L(2,1)$-labelling} is polynomial-time solvable for the class of $P_4$-free graphs (called also cographs) \cite{cha62}.

Suppose that $H$ is not an induced subgraph of $P_4$. Bodlaender at al. \cite{bo04} proved that \textsc{$L(2,1)$-labelling} is NP-complete for bipartite graphs, chordal graphs and split graphs. 
Thus if $H$ contains a cycle or $2K_2$ as an induced subgraph, then \textsc{$L(2,1)$-labelling} restricted to the class of $H$-free graphs is NP-complete, since the class of $H$-free graphs is either a superclass of bipartite graphs (if $H$ contains a triangle), or of chordal graphs (if $H$ contains an induced cycle of length greater than 3), or of split graphs (if $H$ contains $2K_2$). 
Finally, let $H$ be a $2K_2$-free acyclic graph which is not an induced subgraph of $P_4$. Then $H$ necessarily contains $3K_1$ as an induced subgraph, and \textsc{$L(2,1)$-labelling} for $H$-free graphs is NP-complete because \textsc{HamiltonianPath} is NP-complete for the class of triangle-free graphs \cite{mu17}. 
\end{proof}

\begin{theorem}\label{thm:Lprime21}
Let $H$ be a triangle-free graph. If $H$ is an induced subgraph of $P_4$, then \textsc{$L'(2,1)$-labelling} restricted to the class of $H$-free graphs is polynomial-time solvable, and it is NP-hard otherwise.
\end{theorem}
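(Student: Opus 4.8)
The plan is to follow the architecture of the proof of Theorem~\ref{thm:L21}, but with two modifications: since $H$ is now triangle-free we cannot use the ``contains a triangle'' case, and since we deal with the injective variant we must route everything through Proposition~\ref{prop:2}, which expresses $\lambda'(G)$ in terms of $\mathrm{pc}(\overline G)$. Two tools carry both directions. First, the Observation reduces each direction to a finite list of extremal graphs $H$: polynomiality propagates down the induced-subgraph order, NP-hardness propagates up. Second, the correspondence ``$G$ is $H$-free if and only if $\overline G$ is $\overline H$-free'' lets me translate any statement about $\lambda'$ on $H$-free graphs into a statement about path covers (Hamiltonian paths) on $\overline H$-free graphs.

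For the polynomial direction, assume $H$ is an induced subgraph of $P_4$. By the first item of the Observation it suffices to solve \textsc{$L'(2,1)$-labelling} on $P_4$-free graphs (cographs). Here I would reduce the computation of $\lambda'$ to that of $\mathrm{pc}(\overline G)$: cographs are closed under complementation, so $\overline G$ is again a cograph, on which \textsc{$L(2,1)$-labelling} is polynomial by Chang and Kuo~\cite{cha62}. Combining Proposition~\ref{prop:1} and Proposition~\ref{prop:2} recovers $\mathrm{pc}(\overline G)$ from $\lambda(G)$ (explicitly, $\lambda'(G)=\max\{\lambda(G),\,n-1\}$), so $\lambda'(G)$ is computable in polynomial time.

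For the NP-hardness direction, let $H$ be triangle-free and not an induced subgraph of $P_4$. The combinatorial core is the claim that every such $H$ contains one of $3K_1$, $2K_2$, $C_4$, $C_5$ as an induced subgraph: if $\alpha(H)\ge 3$ then $H\supseteq 3K_1$, and otherwise $\alpha(H)\le 2$, so $H$ being triangle-free the Ramsey bound $R(3,3)=6$ forces $|V(H)|\le 5$, after which a short (routine) case analysis of triangle-free graphs on at most five vertices with independence number at most two shows $H$ is an induced subgraph of $P_4$ or contains $2K_2$, $C_4$, or $C_5$. By the second item of the Observation it then suffices to prove NP-hardness for $H\in\{3K_1,2K_2,C_4,C_5\}$. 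For each, I use the complement reduction: by Proposition~\ref{prop:2}, deciding whether $\lambda'(G)=n-1$ is exactly deciding whether $\mathrm{pc}(\overline G)=1$, i.e.\ whether $\overline G$ has a Hamiltonian path, so I reduce from \textsc{HamiltonianPath} on a class $\mathcal C$ and output $G=\overline{G'}$ for $G'\in\mathcal C$; the reduced instances are $H$-free precisely when every member of $\mathcal C$ is $\overline H$-free. Since $\overline{3K_1}=K_3$, $\overline{C_5}=C_5$, $\overline{2K_2}=C_4$, and $\overline{C_4}=2K_2$, I take $\mathcal C$ to be the class of bipartite graphs for $H\in\{3K_1,C_5\}$ (bipartite graphs are both triangle-free and $C_5$-free) and the class of split graphs for $H\in\{2K_2,C_4\}$ (split graphs are $2K_2$-, $C_4$-, and $C_5$-free). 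As \textsc{HamiltonianPath} is NP-complete on bipartite graphs~\cite{mu96} and on split graphs~\cite{golumbic2004algorithmic}, all four cases yield NP-hardness of \textsc{$L'(2,1)$-labelling} on $H$-free graphs.

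The step I expect to require the most care is the NP-hardness transfer. I must ensure the hardness invoked is for the \emph{path} version (not only the Hamiltonian \emph{cycle} version) on split graphs; this follows from the standard reductions but should be cited or argued explicitly. I must also apply the class containments correctly --- that every bipartite graph is triangle-free and $C_5$-free, and that every split graph is $2K_2$-, $C_4$-, and $C_5$-free --- so that each reduced instance $\overline{G'}$ is genuinely $H$-free. Finally, the enumeration of triangle-free graphs with independence number at most two must be complete, since a single missed small graph would leave a gap in the characterization.
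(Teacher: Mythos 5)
Your proposal is correct and follows essentially the same route as the paper: the polynomial side reduces to cographs (Chang--Kuo via Propositions~\ref{prop:1} and~\ref{prop:2}), and the hardness side splits on $\alpha(H)\ge 3$ versus $\alpha(H)\le 2$, using Proposition~\ref{prop:2} to transfer NP-hardness of \textsc{HamiltonianPath} through complementation, with split graphs (being $\{2K_2,C_4,C_5\}$-free and closed under complementation) covering the $\alpha(H)\le 2$ cases. The only cosmetic differences are that you handle $C_5$ via bipartite graphs and $3K_1$ via bipartite rather than general triangle-free graphs, and that you spell out the Ramsey/small-graph enumeration and the identity $\lambda'(G)=\max\{\lambda(G),n-1\}$ that the paper leaves implicit.
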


\begin{proof}
\textsc{$L'(2,1)$-labelling} is polynomial-time solvable for the class of $P_4$-free graphs \cite{cha62}.

Suppose that $H$ is not an induced subgraph of $P_4$ and $K_3$ is not an (induced) subgraph of $H$. If $\alpha(H)\ge 3$, i.e., $H$ contains an induced copy of $3K_1$,  \textsc{$L'(2,1)$-labelling} is NP-complete for $H$-free graphs because \textsc{HamiltonianPath} is NP-complete for the class of triangle-free graphs \cite{mu17}.

If $\alpha(H)\le 2$, $H$ contains a cycle or $2K_2$ is an induced subgraph.
\textsc{HamiltonianPath} is NP-complete for the class of split graphs by \cite{mu96}. Split graphs are self-complementary and they form a subclass of  the class of co-chordal graphs. Hence, if $H$ contains an induced cycle of length 4 or 5, or an induced copy of $2K_2$, then \textsc{$L'(2,1)$-labelling} restricted to the class of $H$-free graphs is NP-complete. 
\end{proof}

If $H = K_3$, then the complement class of $H$-free graphs is the class of  $3K_1$-free graphs. It follows from \cite{br00} that each $3K_1$-free graph $G$ has a Hamiltonian path. Thus, $\lambda'(G)$ is equal to the number of vertices minus one. The general case of graphs $H$ containing triangles seems complicated. As the first step towards this goal, we observe a corollary of our result on path covers.

\begin{corollary}
For every $k$, \textsc{$L'(2,1)$-labelling} is polynomial time solvable in the class of $K_k$-free graphs.
\end{corollary}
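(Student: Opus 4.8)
The plan is to observe that for $K_k$-free graphs the value $\lambda'(G)$ is completely governed by the path cover number of the complement, and that this latter quantity falls directly under Corollary~\ref{cor:pathcover}. By Proposition~\ref{prop:2}, for any graph $G$ on $n$ vertices we have $\lambda'(G)=n-1$ when $\mathrm{pc}(\overline{G})=1$, and $\lambda'(G)=n+r-2$ when $\mathrm{pc}(\overline{G})=r\ge 2$. Thus determining $\lambda'(G)$ exactly reduces to computing the single integer $\mathrm{pc}(\overline{G})$, after which the value of $\lambda'(G)$ is simply read off from the formula; any decision version of the labelling problem then follows immediately.

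The one point that must be verified is that passing to the complement keeps us inside the graph class to which Corollary~\ref{cor:pathcover} applies. This is exactly the duality between cliques and independent sets: a set of $k$ vertices induces a $K_k$ in $G$ precisely when that same set is independent (induces $kK_1$) in $\overline{G}$. Hence $G$ is $K_k$-free if and only if $\overline{G}$ is $kK_1$-free, that is, $\alpha(\overline{G})<k$, which is precisely the hypothesis of Corollary~\ref{cor:pathcover}.

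It then remains only to assemble the steps. Given a $K_k$-free graph $G$ on $n$ vertices, I would first form $\overline{G}$ in time $O(n^2)$; by the previous paragraph $\overline{G}$ is $kK_1$-free, so Corollary~\ref{cor:pathcover} computes $r=\mathrm{pc}(\overline{G})$ in time $O(n^{f(k,k-1)})$. Finally Proposition~\ref{prop:2} yields $\lambda'(G)$ in constant additional time, giving a total running time that is polynomial for every fixed $k$. I do not expect a genuine obstacle here: the corollary is essentially a direct combination of the clique/independence-set duality with two results already established in the paper, and the only mild subtlety is the bookkeeping of class membership under complementation, which that duality settles at once.
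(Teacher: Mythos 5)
Your proof is correct and follows exactly the paper's approach: the paper's own proof is the single sentence that \textsc{PathCover} is polynomial on $kK_1$-free graphs by Corollary~\ref{cor:pathcover}, leaving implicit precisely the details you spell out (the complementation duality $G$ is $K_k$-free iff $\overline{G}$ is $kK_1$-free, and the reduction of $\lambda'(G)$ to $\mathrm{pc}(\overline{G})$ via Proposition~\ref{prop:2}). Your write-up is simply a more explicit version of the same argument, with a correct running-time bound.
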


\begin{proof}
For every $k$, {\sc PathCover} is solvable in polynomial time on $kK_1$-free graphs by Corollary~\ref{cor:pathcover}. 
\end{proof}

This leaves the following open problem.

\begin{problem}
Let $H$ be a graph which is not complete, but contains a triangle. Determine the complexity of \textsc{$L'(2,1)$-labelling} restricted to the class of $H$-free graphs. 
\end{problem}

\section{Conclusion}\label{sec:Conclusion}

\subsection{Maximum linkages, longest paths}

An optimization version of the Hamiltonian path problem is looking for the maximum possible length of a path in the input graph, or the maximum length of a path connecting two specified vertices. A straightforward generalization to Hamiltonian linkages is the {\sc Maximum-$\ell$-Linkage} problem which asks for the maximum number of vertices that can be covered by a linkage of given $\ell$ pairs of vertices. Our Algorithm~\ref{alg:algorithm} can be modified to solve this optimization version in the following way.

Call the {\em defect} of a graph $G$ with respect to vertices $s_1,\ldots,s_{\ell},t_1,\ldots,t_{\ell}$, denoted by $\mbox{def}(G,\ell,s_1,\ldots,s_{\ell},t_1,\ldots,t_{\ell})$, the minimum number of vertices that remain uncovered by a linkage of the vertices $s_1,\ldots,s_{\ell},t_1,\ldots,t_{\ell}$ in $G$. Set  $\mbox{def}(G,\ell,s_1,\ldots,s_{\ell},t_1,\ldots,t_{\ell})=\infty$ if $G$ does not contain any linkage of  the vertices $s_1,\ldots,s_{\ell},t_1,\ldots,t_{\ell}$. Then the algorithm $\mbox{MinDefect}(G,k,\ell,s_1,\ldots,s_{\ell},t_1,\ldots,t_{\ell})$ which will return a value from $\mathbb{Z}^+\cup\{0,\infty\}$
will run similarly as Algorithm~\ref{alg:algorithm}, with the following adjustments:

\begin{itemize}
\item The base case of $\ell=0$ is $\mbox{MinDefect}(G,k,0)\leftarrow |V(G)|$.
\item If $c_v(G)\ge g(k,\ell)$, we return $\mbox{MinDefect}(G,k,\ell,s_1,\ldots,t_{\ell})=0$.
\item A collection of scenarios is plausible if every vertex of the graph belongs to at most one of its scenarios.
\item We return $\mbox{MinDefect}(G,k,\ell,s_1,\ldots,t_{\ell})=\infty$ if there is no plausible collection of scenarios.
\item Instead of processing all scenario collections until finding one that allows a Hamiltonian linkage, we compare the defects $\mbox{Defect}(G,k,{\cal S})$ of all scenario collections and return the minimum one.
\item The defect with respect to one scenario collection $\cal S$ is computed recursively as
$$\mbox{Defect}(G,k,{\cal S})=|A\setminus\bigcup{\cal S}|+\sum_{i=1}^s \mbox{MinDefect}(G[Q'_i],k-1,\ell'_i,s'_i,\ldots,t'_i).$$  
\end{itemize} 
The running time is essentially the same as of Algorithm~\ref{alg:algorithm}, i.e., $O(n^{f(k,\ell)})$. By checking all possible $O(n^{2\ell})$ pairs of need-to-be-linked vertices, we can determine the maximum number of vertices that can be covered by $\ell$ disjoint paths in a $kK_1$-free graph $G$ in time $O(n^{2\ell+f(k,\ell)})$. By checking all pairs of adjacent vertices, we can determine the length of a longest cycle in time $O(n^{2+f(k,1)})$.   

\subsection{Constructing Hamiltonian linkages, paths and cycles}\label{subsec:construct}

If we are interested in the constructive versions of the problems, rather than only the decision ones, we can actually construct a maximum linkage in a similar running time. In the recursive calls we return the optimal linkages of the components $Q_i'$ instead of just the defect, and we concatenate them to an optimal linkage of entire $G$. 

Slightly more tricky is the case of highly connected input graphs. If $c_v(G)\ge g(k,\ell)$, we first construct a linkage of the input vertices via the method of~\cite{thomas2005improved}. Their proof of existence can be turned into a constructive one, the most costly part is finding a rigid separation, which can be done by brute force in $O(n^{2\ell})$ time. So altogether in $O^*(n^{2\ell})$ 
time we find a linkage of the need-to-be-linked vertices. Then we employ our proof of Theorem~\ref{thm:Ham-linked}. If the current linkage $P_1,\ldots,P_{\ell}$ does not contain all vertices, we pick a vertex $x$ uncovered by this linkage, find the paths $R_1,\ldots,R_m$ using network flow techniques in polynomial time and improve the linkage along the lines of the proof. This improving step needs to be repeated at most $n$ times.    
We conclude that the case of highly connected input graphs can be solved in $O^*(n^{2\ell})$ time. 
Since $f(k,\ell)\ge 2\ell$, the overall running time is upper bounded by the running time of the recursive case.

\subsection{Fixed parameter tractability}

From the Fixed Parameter Tractability point of view,  
our results can be restated so that {\sc Hamiltonian-$\ell$-Linkage}, {\sc Hamiltonian-$\ell$-Linkedness}, {\sc $\ell$-Path-Cover}, {\sc Maximum-$\ell$-Linkage} and {\sc Maximum-$\ell$-Path-Cover} are in the class XP when parameterized by $\ell$ and $\alpha(G)$. The natural question is the following.

\begin{problem}
Is any of these problems in FPT when parameterized by $\ell + \alpha(G)$?
\end{problem}

The problems with bounded $\ell$, in other words, {\sc Hamiltonian-Path}, {\sc Hamiltonian-Cycle}, {\sc Hamiltonian-$1$-Linkage} and {\sc Path-Cover} are in the class XP when parameterized by $\alpha(G)$. Thus we ask the following question.

\begin{problem}
Is any of these problems in FPT when parameterized by $\alpha(G)$?
\end{problem}

This is also a good place to comment on an estimate of the function $f(k,\ell)$. A careful analysis of the recursion $f(k,\ell)\le 2g(k\ell)+f(k-1,g(k,\ell)+\ell)$ shows that $f(k,\ell)=O((k+1)!\cdot \ell)$. We omit the proof, since our primary goal was just to show that $f(k,\ell)< \infty$.

\subsection{Graphs with bounded tree independence number} 

A very recent approach in algorithmic applications of width parameters of graphs is the so called \emph{tree independence number}. 
Dallard, Milanič and Štorgel~\cite{dallard2021tree} define it in the following way. If 
 $\mathcal{T} = (T, \{ X_t \}_{t\in V(T)})$ is a tree decomposition of a graph $G$, the \emph{tree independence number} of $\cal{T}$  is defined as 
$$ \alpha(\mathcal{T}) = \mathrm{max}_{t \in V(T)} (\alpha(G[X_t])$$
and the tree independence number of $G$, denote by $\mathrm{tree-}\alpha(G)$ is the minimum tree independence number $ \alpha(\mathcal{T})$ among all possible tree decompositions $\mathcal{T}$ of $G$. They prove that several optimization problems are polynomial time solvable on graphs of bounded tree independence number.

However, the hope that our results could be extended to graphs of bounded tree independence number has vanished very quickly. 
Hamiltonian circuit and Hamiltonian path are NP-complete for chordal graphs, cf.~\cite{bertossi1986hamiltonian}, i.e., for graphs of tree independence number equal to 1.

\bibliography{references}

\begin{thebibliography}{10}
\providecommand{\url}[1]{\texttt{#1}}
\providecommand{\urlprefix}{URL }
\providecommand{\doi}[1]{https://doi.org/#1}

\bibitem{bertossi1986hamiltonian}
Bertossi, A.A., Bonuccelli, M.A.: Hamiltonian circuits in interval graph
  generalizations. Information Processing Letters  \textbf{23}(4),  195--200
  (1986)

\bibitem{BKTL00}
Bodlaender, H.L., Kloks, T., Tan, R.B., van Leeuwen, J.: lambda-coloring of
  graphs. In: Reichel, H., Tison, S. (eds.) {STACS} 2000, 17th Annual Symposium
  on Theoretical Aspects of Computer Science, Lille, France, February 2000,
  Proceedings. Lecture Notes in Computer Science, vol.~1770, pp. 395--406.
  Springer (2000). \doi{10.1007/3-540-46541-3\_33},
  \url{https://doi.org/10.1007/3-540-46541-3\_33}

\bibitem{bo04}
Bodlaender, H.L., Kloks, T., Tan, R.B., Van~Leeuwen, J.: Approximations for
  $\lambda$-colorings of graphs. The Computer Journal  \textbf{47}(2),
  193--204 (2004)

\bibitem{bondy2003short}
Bondy, J.A.: Short proofs of classical theorems. Journal of Graph Theory
  \textbf{44}(3),  159--165 (2003)

\bibitem{br00}
Brandst{\"a}dt, A., Dragan, F.F., K{\"o}hler, E.: Linear time algorithms for
  {Hamiltonian} problems on (claw, net)-free graphs. SIAM Journal on Computing
  \textbf{30}(5),  1662--1677 (2000)

\bibitem{cha62}
Chang, G.J., Kuo, D.: The l(2,1)-labeling problem on graphs. SIAM Journal on
  Discrete Mathematics  \textbf{9}(2),  309--316 (1996)

\bibitem{Spirkl2019}
Chudnovsky, M., Spirkl, S., Zhong, M.: Four-coloring ${P}_6$-free graphs. In:
  Proceedings of the Thirtieth Annual ACM-SIAM Symposium on Discrete
  Algorithms. pp. 1239--1256. SIAM (2019)

\bibitem{chvatal1972hamilton}
Chv{\'a}tal, V.: On {Hamilton}'s ideals. Journal of Combinatorial Theory,
  Series B  \textbf{12}(2),  163--168 (1972)

\bibitem{chvatal1972note}
Chv{\'a}tal, V., Erd{\"o}s, P.: A note on {Hamiltonian} circuits. Discret.
  Math.  \textbf{2}(2),  111--113 (1972)

\bibitem{dallard2021tree}
Dallard, C., Milani{\v{c}}, M., {\v{S}}torgel, K.: Tree decompositions with
  bounded independence number and their algorithmic applications. arXiv
  preprint arXiv:2111.04543  (2021)

\bibitem{damaschke1989hamiltonian}
Damaschke, P.: The {Hamiltonian} circuit problem for circle graphs is
  {NP}-complete. Information Processing Letters  \textbf{32}(1), ~1--2 (1989)

\bibitem{damaschke1993paths}
Damaschke, P.: Paths in interval graphs and circular-arc graphs. Discrete
  Mathematics  \textbf{112}(1-3),  49--64 (1993)

\bibitem{damaschke1991finding}
Damaschke, P., Deogun, J.S., Kratsch, D., Steiner, G.: Finding {Hamiltonian}
  paths in cocomparability graphs using the bump number algorithm. Order
  \textbf{8}(4),  383--391 (1991)

\bibitem{dean1993computational}
Dean, A.M.: The computational complexity of deciding
  {hamiltonian}-connectedness. Congressus Numerantium pp. 209--209 (1993)

\bibitem{denley2001generalization}
Denley, T., Wu, H.: A generalization of a theorem of {Dirac}. Journal of
  combinatorial theory. Series B  \textbf{82}(2),  322--326 (2001)

\bibitem{dirac1952some}
Dirac, G.A.: Some theorems on abstract graphs. Proceedings of the London
  Mathematical Society  \textbf{3}(1),  69--81 (1952)

\bibitem{Duf1981}
Duffus, D., Gould, R., Jacobson, M.: Forbidden subgraphs and the {Hamiltonian}
  theme. In: I4th Int. Conf. on the Theory and Applications of Graphs,
  Kalamazoo, 1980. pp. 297--316. Wiley (1981)

\bibitem{FGK05}
Fiala, J., Golovach, P.A., Kratochv{\'{\i}}l, J.: Distance constrained
  labelings of graphs of bounded treewidth. In: Caires, L., Italiano, G.F.,
  Monteiro, L., Palamidessi, C., Yung, M. (eds.) Automata, Languages and
  Programming, 32nd International Colloquium, {ICALP} 2005, Lisbon, Portugal,
  July 11-15, 2005, Proceedings. Lecture Notes in Computer Science, vol.~3580,
  pp. 360--372. Springer (2005). \doi{10.1007/11523468\_30},
  \url{https://doi.org/10.1007/11523468\_30}

\bibitem{Garey:2000}
Garey, M.R., Johnson, D.S.: Computers and Intractability, A Guide to the Theory
  of NP-Completeness. W. H. Freeman and Company, New York, 22 edn. (2000)

\bibitem{garey1976planar}
Garey, M.R., Johnson, D.S., Tarjan, R.E.: The planar {Hamiltonian} circuit
  problem is {NP}-complete. SIAM Journal on Computing  \textbf{5}(4),  704--714
  (1976)

\bibitem{ge94}
Georges, J.P., Mauro, D.W., Whittlesey, M.A.: Relating path coverings to vertex
  labellings with a condition at distance two. Discrete Mathematics
  \textbf{135}(1-3),  103--111 (1994)

\bibitem{golumbic2004algorithmic}
Golumbic, M.C.: Algorithmic graph theory and perfect graphs. Elsevier (2004)

\bibitem{gould2003advances}
Gould, R.J.: Advances on the {Hamiltonian} problem -- a survey. Graphs and
  Combinatorics  \textbf{19}(1),  7--52 (2003)

\bibitem{gould2014recent}
Gould, R.J.: Recent advances on the {Hamiltonian} problem: Survey {III}. Graphs
  and Combinatorics  \textbf{30}(1),  1--46 (2014)

\bibitem{griggs1992labelling}
Griggs, J.R., Yeh, R.K.: Labelling graphs with a condition at distance 2. SIAM
  Journal on Discrete Mathematics  \textbf{5}(4),  586--595 (1992)

\bibitem{henzinger2000computing}
Henzinger, M.R., Rao, S., Gabow, H.N.: Computing vertex connectivity: new
  bounds from old techniques. Journal of Algorithms  \textbf{34}(2),  222--250
  (2000)

\bibitem{jedlivckova2024structure}
Jedli{\v{c}}kov{\'a}, N., Kratochv{\'\i}l, J.: On the structure of
  {Hamiltonian} graphs with small independence number. arXiv preprint
  arXiv:2403.03668  (2024)

\bibitem{Karp1972}
Karp, R.M.: Reducibility among Combinatorial Problems, pp. 85--103. Springer
  US, Boston, MA (1972)

\bibitem{keil1985finding}
Keil, J.M.: Finding {Hamiltonian} circuits in interval graphs. Information
  Processing Letters  \textbf{20}(4),  201--206 (1985)

\bibitem{kuvzel2012thomassen}
Ku{\v{z}}el, R., Ryj{\'a}{\v{c}}ek, Z., Vr{\'a}na, P.: Thomassen's conjecture
  implies polynomiality of 1-{Hamilton}-connectedness in line graphs. Journal
  of Graph Theory  \textbf{69}(3),  241--250 (2012)

\bibitem{mu96}
M{\"u}ller, H.: Hamiltonian circuits in chordal bipartite graphs. Discrete
  Mathematics  \textbf{156}(1-3),  291--298 (1996)

\bibitem{mu17}
Munaro, A.: On line graphs of subcubic triangle-free graphs. Discrete
  Mathematics  \textbf{340}(6),  1210--1226 (2017)

\bibitem{Mutze2016}
M{\"u}tze, T.: Proof of the middle levels conjecture. Proceedings London Math.
  Society  \textbf{112}(4),  677--713 (2016)

\bibitem{Mutze2018}
M{\"{u}}tze, T., Nummenpalo, J., Walczak, B.: Sparse kneser graphs are
  hamiltonian. In: Diakonikolas, I., Kempe, D., Henzinger, M. (eds.)
  Proceedings of the 50th Annual {ACM} {SIGACT} Symposium on Theory of
  Computing, {STOC} 2018, Los Angeles, CA, USA, June 25-29, 2018. pp. 912--919.
  {ACM} (2018). \doi{10.1145/3188745.3188834}

\bibitem{ore1960note}
Ore, O.: A note on {Hamiltonian} circuits. American Mathematical Monthly
  \textbf{67}, ~55 (1960)

\bibitem{thomas2005improved}
Thomas, R., Wollan, P.: An improved linear edge bound for graph linkages.
  European Journal of Combinatorics  \textbf{26}(3-4),  309--324 (2005)

\end{thebibliography}

\end{document}